\newcommand{\LB}{\left(}
\newcommand{\RB}{\right)}
\newcommand{\LSB}{\left[}
\newcommand{\RSB}{\right]}
\newcommand{\LCB}{\left\{}
\newcommand{\RCB}{\right\}}
\newcommand{\lv}{\lvert}
\newcommand{\rv}{\rvert}
\newcommand{\htp}{^{\sf H}}
\newcommand{\tp}{^{\sf T}}
\newcommand{\EE}{{\mathbb E}}
\newcommand{\PP}{{\mathbb P}}
\newcommand{\f}{{\bf f}}
\newcommand{\g}{{\bf g}}
\newcommand{\h}{{\bf h}}
\newcommand{\w}{{\bf w}}
\newcommand{\x}{{\bf x}}
\newcommand{\zerov}{{\bf 0}}
\newcommand{\G}{{\bf G}}
\newcommand{\I}{{\bf I}}
\newcommand{\W}{{\bf W}}
\newcommand{\X}{{\bf X}}
\newcommand{\Cc}{{\cal C}}
\newcommand{\Nc}{{\cal N}}
\newtheorem{thm}{Theorem}
\newtheorem{lem}{Lemma}
\newtheorem{definition}{Definition}
\newtheorem{rem}{Remark}
\newtheorem{app}{Approximation}
\newcommand{\mr}{\mathrm}
\begin{document}

\title{Large-Scale MIMO versus Network MIMO for Multicell Interference Mitigation}

\author{Kianoush Hosseini,~\IEEEmembership{Student Member,~IEEE,}
Wei Yu,~\IEEEmembership{Fellow Member,~IEEE,} \\
and Raviraj S. Adve,~\IEEEmembership{Senior Member,~IEEE}
\thanks{Manuscript received September 29, 2013; revised January 30, 2014, April
01, 2014; accepted May 12, 2014. Date of publication May 30, 2014. This work is supported by Natural Science and Engineering Research Council (NSERC) of Canada. This paper has been presented in part at IEEE International Workshop on Signal Processing Advances in Wireless Communications (SPAWC), Toronto, ON, Canada, June 2014.

Authors are with The Edward S. Rogers Sr. Department of Electrical and Computer Engineering, University of Toronto, 10 King's College Road, Toronto, Ontario, M5S 3G4, Canada (e-mails: kianoush,~weiyu,~rsadve@comm.utoronto.ca).}}

\maketitle

\begin{abstract}
This paper compares two important downlink multicell interference mitigation techniques, namely, large-scale (LS) multiple-input multiple-output (MIMO) and network MIMO. We consider a cooperative wireless cellular system operating in time-division duplex (TDD) mode, wherein each cooperating cluster includes $B$ base-stations (BSs), each equipped with multiple antennas and scheduling $K$ single-antenna users. In an LS-MIMO system, each BS employs $BM$ antennas not only to serve its scheduled users, but also to null out interference caused to the other users within the cooperating cluster using zero-forcing (ZF) beamforming. In a network MIMO system, each BS is equipped with only $M$ antennas, but interference cancellation is realized by data and channel state information exchange over the backhaul links and joint downlink transmission using ZF beamforming. Both systems are able to completely eliminate intra-cluster interference and to provide the same number of spatial degrees of freedom per user. Assuming the uplink-downlink channel reciprocity provided by TDD, both systems are subject to identical channel acquisition overhead during the uplink pilot transmission stage. Further, the available sum power at each cluster is fixed and assumed to be equally distributed across the downlink beams in both systems. Building upon the channel distribution functions and using tools from stochastic ordering, this paper shows, however, that from a performance point of view, users experience better quality of service, averaged over small-scale fading, under an LS-MIMO system than a network MIMO system. Numerical simulations for a multicell network reveal that this conclusion also holds true with regularized ZF beamforming scheme. Hence, given the likely lower cost of adding excess number of antennas at each BS, LS-MIMO could be the preferred route toward interference mitigation in cellular networks.
\end{abstract}
\begin{keywords}
Multicell interference mitigation, large-scale MIMO, network MIMO, performance analysis, stochastic orders
\end{keywords}

\section{Introduction}\label{sec:intro}
\PARstart{I}{n} traditional wireless cellular networks, base-stations (BSs) operate independently and treat out-of-cell interference as additional background noise. However, as the BSs become densely deployed, intercell interference is considered as a key limiting factor in advanced wireless networks. A promising approach for intercell interference mitigation is multicell coordination that aligns the transmit and receive strategies of the BSs in order to reduce, or even to completely eliminate, intercell interference. While the coordination strategies are of crucial significance to meet quality-of-service (QoS) requirements of future wireless networks, a systematic comparative analysis of their performances is not yet available. The main objective of this paper is to compare two distinct downlink interference mitigation techniques: large-scale (LS) multiple-input multiple-output (MIMO) and network MIMO.

In an LS-MIMO system, intra-cluster interference mitigation is achieved at the expense of needing to accommodate a large number of antennas at each cell-site. In particular, with the knowledge of channel state information (CSI) to the users within a cluster, each BS exploits its excess number of spatial dimensions not only to serve its multiple associated users using downlink beamforming, but also to employ an interference coordination (IC)~\cite{GHHSSY10} scheme to choose its beam directions in order to null out the interference caused to the other users within the cluster.

In a network MIMO system, multiple scattered BSs share the CSI of the users within the cluster as well as their data symbols through backhaul links. Hence, joint downlink data transmission becomes feasible~\cite{GHHSSY10,JTSHSP08,HMKLC11,KFV06}. Specifically, by jointly designing the downlink beams to spatially multiplex multiple users, intra-cluster interference can be completely eliminated. Network MIMO systems are often hailed as capable of achieving the ultimate capacity limit of cellular networks~\cite{GHHSSY10}.

This paper shows that the interference mitigation enabled by exploiting a large number of antennas at each cell-site significantly outperforms the joint processing scheme in a network MIMO system under a general class of utility functions. Consequently, given the likely lower cost of adding extra antennas at each cell-site versus establishing data-sharing through the backhaul and joint transmission across the cooperating BSs~\cite{RC09}, LS-MIMO could be the preferred route toward a practical realization of interference mitigation in multicell wireless networks.

\subsection{Related Work}

The LS-MIMO system considered in this paper is akin to a \emph{non-cooperative} massive MIMO system~\cite{M10,RPLLM13,LTEM13,YM13}, wherein each BS is equipped with \emph{asymptotically} large number of antennas. In this regime, the effect of uncorrelated intercell interference vanishes and multicell coordination is not required. The fundamental reason that a massive MIMO system is able to completely mitigate interference is that it emulates an \emph{interference coordination} scheme as the number of antennas goes to infinity. However, this desirable feature of massive MIMO networks comes at the expense of significant additional cost for the large number of required analog front-ends. Furthermore, the antenna elements must be well-separated so that their mutual coupling can be neglected. Consequently, to accommodate the antennas, an implementation of a massive MIMO system requires significant physical dimensions. This paper, however, considers an operating regime with a \emph{finite} and \emph{fixed} number of BS antennas. Specifically, each BS is equipped with only enough antennas to cancel intra-cluster interference, and to provide a finite number of spatial degrees of freedom (DoF) per user. Such a system is not only more practical, but also by defining fixed \emph{cooperating} clusters, it makes the comparison with a network MIMO system feasible.

The benefits of BS cooperation schemes have been extensively investigated in the literature~\cite{GHHSSY10,LSCHMNS12}. In particular, IC systems with beamforming and power adaptation have been explored~\cite{DY10,YKS13,HQSH11}. To account for the heterogeneity of the transmitter locations, stochastic geometry approaches have been used to derive closed form expressions for various performance metrics with IC via zero-forcing (ZF) beamforming in ad hoc~\cite{KA12,JAW11,HAGHB12} and cellular~\cite{VKKG13} networks.

Practical system designs for network MIMO systems considering limited backhaul capacity~\cite{MF08,MF09}, and cooperation across only a relatively small set of adjacent BSs~\cite{MF08,RCP09,CRP10,HTC12,BH07} have been investigated. In a related line of work, distributed antenna systems (e.g.~\cite{K96,CA07_2}) have been compared to co-located antenna systems. However, as acquiring new cell-sites could be costly for network providers, this paper assumes that the number of cell-sites is constant, but allows the number of antennas and corresponding backhaul requirements to vary in the comparative study.

\subsection{Main Contributions}

Despite the numerous proposed interference management algorithms and analytic investigations for LS-MIMO and network MIMO techniques, their performance has been compared only in special cases, e.g. a downlink of a two-cell cellular network~\cite{ZH12}. In contrast to the aforementioned series of works, rather than focusing on one coordination strategy, the main emphasis of this paper is an analytic comparison between LS-MIMO and network MIMO techniques. In order to place the comparison on the concrete footing, this paper considers the downlink of a time-division duplex (TDD) multicell network where each cooperating cluster comprises $B$ cells. We compare the following two systems with identical BS deployment:

\begin{itemize}
\item An LS-MIMO system where each BS is equipped with $BM$ transmit antennas, and spatially multiplexes $K$ users from within its cell area via ZF beamforming with $K \leq M$, while choosing its beam directions so as not to interfere with the other $K \LB B - 1\RB$ users in its cluster by adopting an IC scheme.

\item A network MIMO system where each BS is equipped with $M$ transmit antennas, and schedules $K$ users from within its cell area with $K \leq M$. The $BK$ scheduled users are then jointly served by the cooperating BSs using ZF beamforming.	
\end{itemize}

Although the LS-MIMO and network MIMO systems described here require distinct network infrastructure, the two systems share common features. First, by invoking the orthogonality property of ZF beamforming~\cite{YG06}, both systems provide the same number of spatial DoF per user, i.e., $\zeta = B\LB M - K \RB + 1$. Even though each BS in an LS-MIMO system is equipped with $B$ times more antenna elements than in a network MIMO system, it sacrifices many of the spatial dimensions provided by the transmit antennas to cancel intra-cluster interference. Second, given that both systems serve the same number of users during each time-slot, the cost of CSI acquisition is identical as well. The reason is that, in TDD networks, the downlink CSI can be obtained from uplink pilot transmission. Hence, the estimation overhead is only a function of the number of users and is independent of the number of antennas at the BSs~\cite{M10,HBD13}. Finally, the total transmit power of each cooperating cluster is kept the same and is equally distributed across the downlink beams in both scenarios. As a consequence, it is not immediately obvious that one system is superior to the other from a performance point of view.

This paper shows that these two systems in fact have stark differences in terms of signal propagation characteristics. Note that each user is only associated with its adjacent BS in an LS-MIMO system, while it receives its intended signal transmitted from multiple scattered BSs in a network MIMO system. The key observation of this paper is that the disparity in the distances between a given user location and its set of serving BSs incurs a penalty in the received signal power in a network MIMO system. Therefore, the signal power is statistically stronger under an LS-MIMO system for any chosen user. Further, we establish the equivalence, in distribution, of the aggregate interference power experienced at each user location in the two systems. Utilizing tools from stochastic orders~\cite{SS94,R11}, we then incorporate these two results and provide a careful stochastic analysis of the signal-to-interference-plus-noise ratio (SINR) of any chosen user under both systems. Our analytical derivations demonstrate that an LS-MIMO system provides significant gains in terms of key network performance metrics, e.g., coverage probability, ergodic rate, and expected weighted sum rate as compared to a network MIMO system. 

Further, our numerical results illustrate that the conclusion of the paper also holds true if regularized ZF (RZF) beamforming is adopted in the network. This is of practical interest since, without imposing any further implementation complexity as compared to ZF beamforming, the RZF scheme provides considerable performance gains for both systems.

Finally, it is worth mentioning that the stochastic ordering techniques have been used to study various aspects of wireless networks; e.g., coverage probability in a non-cooperative multicell network where each BS either serves multiple users via ZF beamforming or employs single-user beamforming~\cite{DKA13}, and interference ordering under different fading and node location distributions~\cite{LT12}.

\subsection{Paper Organization}
The remainder of the paper is organized as follows. Section~\ref{sec:example} provides an illustrative example of the comparison carried out in this paper. Section~\ref{sec:sys_signal_model} summarizes the system assumptions and received signal models under both architectures. Section~\ref{sec:sig_dist} presents the distribution functions of the signal and interference powers in the two systems. Section~\ref{sec:epa} establishes the stochastic ordering results and further evaluates the performance of the two systems. Section~\ref{sec:numerical} presents the numerical results. Finally, Section~\ref{sec:conc} concludes the paper.

\subsection{Notation}
In this paper, matrices, column vectors, and scalars are presented by bold capital letters $\X$, bold letters $\x$, and lowercase letters $x$, respectively. The transpose and Hermitian transpose are, respectively, denoted by $\LB \cdot \RB \tp$ and $\LB \cdot \RB \htp$. The $N \times N$ identity matrix is represented by $\I_N$. A complex Gaussian distribution function with mean $\zerov$ and covariance matrix $\X$ is given by $\Cc \Nc \LB \zerov,\X \RB$. Moreover, $\Gamma \LB k, \theta \RB$ denotes the Gamma distribution function with the shape parameter $k$ and the scale parameter $\theta$. $\PP_{X} \LB A \RB$ denotes the probability of event $A$ and $\EE_{X} \LB \cdot \RB$ is the expectation operation with respect to a random variable $X$. Finally, equivalence in distribution is denoted by $\stackrel{ d }{=}$.

\section{An Illustrative Example}\label{sec:example}

Consider a wireless network consisting of two cooperating BSs each scheduling $K = 1$ user during a given time-slot as schematically depicted in Fig.~\ref{fig:example}. The BSs are either equipped with $N_t = 2$ transmit antennas and sacrifice one spatial DoF to null out the interference on the neighboring user or are equipped with $N_t = 1$ antenna and form a network MIMO system. Note that both systems have enough antenna elements to completely null out intercell interference and to provide one spatial DoF per user. The channel vectors capture both small-scale fading and path-loss.

In the LS-MIMO system, user $1$ is being served using the two co-located antennas installed at its closest cell-site; on the other hand, in the network MIMO system, one of its serving antennas is at BS $2$, located further away. The key observation made in this paper is the following: although both $g_{11}^{\mr{LSM}}$ and $g_{11}^{\mr{NM}}$ have identical statistics, $g_{12}^{\mr{NM}}$ is statistically weaker than $g_{12}^{\mr{LSM}}$. Therefore, the received signal power in a network MIMO system is expected to be statistically weaker than that of a comparable LS-MIMO system. One of the objectives of this paper is to make this observation rigorous by a careful statistical comparison of the signal power at an arbitrarily chosen user under the two cooperation schemes.

Although both systems completely eliminate intra-cluster interference, they choose their ZF beam directions according to distinct channel matrices. Therefore, one expects different inter-cluster interference patterns to be created by the two systems. However, it is notable that a user located outside the service area of the cooperating group shown in Fig.~\ref{fig:example} receives \emph{two interfering beams} transmitted from \emph{the same set of BS locations} in both systems. Although not so obvious, the second goal of this paper is to show that, with the same number of interfering beams and identical BS deployment, the interference powers produced by the two systems at any user location are equal in distribution.

Clearly, an analysis to achieve these two objectives needs to take into account the propagation characteristics, i.e., path-loss and small-scale fading, in each system.

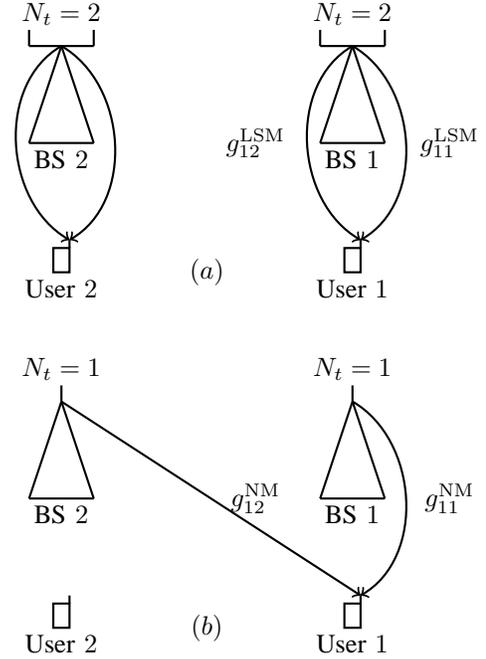
\begin{figure}[t!]
\centering
\begin{tikzpicture}[scale = 0.43]

\draw [thick] (-1,1) -- (1,1);
\draw [thick] (-1,1) -- (0,4);
\draw [thick] (1,1) -- (0,4);
\node (v1) at (0,0.5) [black] {BS $2$};
\draw [thick] (-1,4)--(1,4);
\draw[thick] (-1,4)--(-1,4.5);
\draw [thick] (1,4)--(1,4.5);
\node (vv) at (0,5) [black] {$N_t = 2$};
\draw [thick] (-0.25,-3) rectangle (0.25,-2.25);
\draw [thick] (0.25,-2.25)--(0.25,-2);
\node (vv) at (0,-3.5) [black] {User $2$};
\draw[->,thick] (0,4) to [out = -30, in = 30] (0.25,-2);
\draw[->,thick] (0,4) to [out = 210, in = 150] (0.25,-2);

\draw [thick] (8,1) -- (10,1);
\draw [thick] (8,1) -- (9,4);
\draw [thick] (10,1) -- (9,4);
\node (v1) at (9,0.5) [black] {BS $1$};
\draw [thick] (8,4)--(10,4);
\draw[thick] (8,4)--(8,4.5);
\draw [thick] (10,4)--(10,4.5);
\node (vv) at (9,5) [black] {$N_t = 2$};
\draw [thick] (8.75,-3) rectangle (9.25,-2.25);
\draw [thick] (9.25,-2.25)--(9.25,-2);
\node (vv) at (9,-3.5) [black] {User $1$};

\node at (4.5,-3) [black] {$\LB a \RB$};

\draw[->,thick] (9,4) to [out = -30, in = 30] (9.25,-2);
\draw[->,thick] (9,4) to [out = 210, in = 150] (9.25,-2);
\node (vv) at (12,1) [black] {$g_{11}^{\mr{LSM}}$};
\node (vv) at (6,1) [black] {$g_{12}^{\mr{LSM}}$};
\draw [thick] (-1,-10) -- (1,-10);
\draw [thick] (-1,-10) -- (0,-7);
\draw [thick] (1,-10) -- (0,-7);
\node (v1) at (0,-10.5) [black] {BS $2$};
\draw [thick] (0,-7)--(0,-6.5);
\node (vv) at (0,-6) [black] {$N_t = 1$};
\draw [thick] (-0.25,-14) rectangle (0.25,-13.25);
\draw [thick] (0.25,-13.25)--(0.25,-13);
\node (vv) at (0,-14.5) [black] {User $2$};

\draw [thick] (8,-10) -- (10,-10);
\draw [thick] (8,-10) -- (9,-7);
\draw [thick] (10,-10) -- (9,-7);
\node (v1) at (9,-10.5) [black] {BS $1$};
\draw [thick] (9,-7)--(9,-6.5);
\node (vv) at (9,-6) [black] {$N_t = 1$};

\draw [thick] (8.75,-14) rectangle (9.25,-13.25);
\draw [thick] (9.25,-13.25)--(9.25,-13);
\node (vv) at (9,-14.5) [black] {User $1$};

\node at (4.5,-14) [black] {$\LB b \RB$};

\draw[->,thick] (9,-7) to [out = -30, in = 30] (9.25,-13);
\draw[->,thick] (0,-7) to  (9.25,-13);
\node (vv) at (12,-10) [black] {$g_{11}^{\mr{NM}}$};
\node (vv) at (6,-10) [black] {$g_{12}^{\mr{NM}}$};
 \end{tikzpicture}
 \caption{A multicell wireless network where $B = 2$ BSs form $\LB a \RB$ an LS-MIMO system $\LB b \RB$ a network MIMO system.}
 \label{fig:example}
\end{figure}

\section{System Model}\label{sec:sys_signal_model}

We consider a multicell multiuser wireless cellular network comprising a set of $C$ disjoint cooperating clusters of BSs formed by a fixed lattice.\footnote{For simplicity, we assume that clusters are formed using a square lattice in our numerical simulations. However, any clustering approach that forms a partition of the network coverage area can be applied.} Each cluster includes $B$ cooperating BSs, each equipped with multiple transmit antennas, and schedules $K$ single-antenna users from a set of potential users scattered within its cell. Therefore, during each time-slot, a total of $K_{c} = BK$ users are being served in each individual cluster. BSs are constrained to have maximum available power $P_T$ and transmit concurrently over a shared spectrum of bandwidth $W$ with universal frequency reuse.


The channel vector between BS $m$ in cluster $j$ and user $i$ in cluster $l$ is denoted as $\sqrt{\beta_{ilmj}}\h_{ilmj}$, where $\h_{ilmj}$ defines the small-scale Rayleigh channel fading with $\Cc \Nc \LB 0, 1\RB$ independent and identically distributed (i.i.d.) components, and $\beta_{ilmj}$ denotes the distance-dependent path-loss coefficient. We consider a standard path-loss model $\beta_{ilmj} = r_{ilmj}^{-\alpha}$, where $r_{ilmj}$ denotes the distance between BS $m$ in cluster $j$ and user $i$ in cluster $l$, and $\alpha$ respresents the path-loss exponent.
Further, since the channel estimation overhead does not influence the comparison of this paper, perfect channel estimation is assumed.\footnote{The cooperating BSs are required to share their CSI through backhaul links in a network MIMO system. The conclusion of this paper is valid even without accounting for this signaling overhead.}

For analytical tractability, we further impose the following simplifying assumptions:
\begin{enumerate}
\item[$A1:$] ZF beamforming is adopted in this paper. Even though suboptimal in general, ZF beamforming achieves the same asymptotic sum rate as that of the non-linear dirty-paper coding as the number of users increases, while being significantly less complex~\cite{YG06}.

\item[$A2:$] The ZF beams designed in each cluster are, in general, not orthogonal. However, in order to characterize the interference power distributions in the two architectures, similar to other related works,  e.g.~\cite{HWKS11,DKA13,CKA09,SHCS13}, we treat the ZF beams as orthogonal vectors. Our numerical results, however, confirm the accuracy of this approximation.


\item[$A3:$] We assume that each cluster is subject to a sum power constraint. Although a per-BS power constraint is more practical, designing downlink ZF beams with per-BS power constraint is computationally complex~\cite{Z10}.

\item[$A4:$] Further, the total power of each cluster is equally distributed across the downlink ZF beams. Although power optimization can further improve system performance, it also requires significant intra-cluster and inter-cluster coordination.

\item[$A5:$] Finally, we consider round-robin scheduling. As a consequence, both coordination schemes serve the same set of $K_c$ users during each time-slot.
\end{enumerate}

In this paper, a wireless cellular system as described above is denoted as a $\LB C, B, N_t, K_c\RB$ system, wherein $N_t$ denotes the number of BS antennas. Further, throughout the paper, subscript $i$ refers to the same user in the two systems, which is served by BS $b$ in cluster $l$ in an LS-MIMO system and is jointly served by the BSs in cluster $l$ in a network MIMO system.

The rest of this section is devoted to presenting the received signal models and SINR expressions in both systems. 

\subsection{Received Signal Model in an LS-MIMO System} \label{sec:sig_model_ic}
Let user $i$ be scheduled by BS $b$ in cluster $l$ during time-slot $t$. This BS is assumed to have perfect knowledge of its CSI to the $K_c$ users within its cluster, i.e., the compound channel matrix $\G_{bl} = \LSB \g_{1lbl}, \ldots, \g_{K_clbl}\RSB \in \Cc^{BM \times K_c}$, where $\g_{ilbl} = \sqrt{\beta_{ilbl}}\h_{ilbl}$. In order to spatially multiplex its $K$ users, while suppressing interference on others, BS $b$ designs its downlink ZF beamforming matrix as
\begin{align}
\tilde{\W}_{bl} &= \LSB \G_{bl} \LB \G_{bl} \htp \G_{bl}  \RB^{-1} \RSB_{1:K} = \LSB \tilde{\w}_{1bl}, \ldots, \tilde{\w}_{Kbl} \RSB\nonumber
\end{align}
where $\tilde{\w}_{ibl} \in \Cc^{BM}$ denotes the ZF beam associated with user $i$, and $\LSB \cdot\RSB_{1:K}$ chooses the $K$ columns corresponding with the $K$ scheduled users by BS $b$. Each beam is further normalized to ensure equal power assignment, i.e., $\w_{ibl} = \frac{\tilde{\w}_{ibl}}{\lv \lv \tilde{\w}_{ibl} \rv \rv}$.

Moreover, we define $\f_{ilmj} = \sqrt{\beta_{ilmj}}\h_{ilmj}$ as the interference channel between BS $m$ in cluster $j$ and user $i$ in cluster $l$. The received signal at user $i$ is therefore a sum of its intended signal transmitted by BS $b$, the inter-cluster interference, the receiver noise, and is given by
\begin{multline}
y_{ibl} = \underbrace{\sqrt{\frac{P_T}{K}}\g_{ilbl} \htp \w_{ibl} s_{ibl}}_{\text{intended signal}} + \\
 \underbrace{\sum_{j \neq l} \sum_{m = 1 }^B \sum_{k = 1}^{K} \sqrt{\frac{P_T}{K}} \f_{ilmj} \htp \w_{kmj} s_{kmj}}_{\text{inter-cluster interference}} + \underbrace{n_{ibl}}_{\text{noise}} \label{eq:sig_model_ic}
\end{multline}
where the information signal intended for user $i$ during time-slot $t$ is denoted by a complex scalar $s_{ibl}$ with unit average power, i.e.,  $\EE \LSB \lv s_{ibl} \rv^2 \RSB = 1$, and $n_{ibl}$ denotes the circularly symmetric complex additive white Gaussian noise with variance $\sigma^2$.
Therefore, the SINR of user $i$ associated with BS $b$ in cluster $l$ of an LS-MIMO system is given by
\begin{align}
\gamma_{ibl}^{\mr{LSM,ZF}} &= \frac{\rho \lvert \g_{ilbl}\htp \w_{ibl}\rvert^2}{ \sum_{ j \neq l} \sum_{m} \sum_{k} \rho \lvert \f_{ilmj} \htp \w_{kmj} \rvert^2+ 1} \label{eq:sinr_ic}
\end{align}
where $\rho = \frac{P_T}{K\sigma^2}$ indicates the signal-to-noise ratio (SNR).

\subsection{Received Signal Model in a Network MIMO System}
The $B$ cooperating BSs in cluster $l$ of a network MIMO system are assumed to have perfect knowledge of $\G_{l} = \LSB \g_{1l},\ldots,\g_{K_cl}\RSB \in \Cc^{BM \times K_c}$ denoting the compound channel matrix from the BSs to users within the cluster. Here, $\g_{il} = \LSB \g_{il1l} \tp,\ldots,\g_{ilBl}\tp\RSB \tp\in \Cc^{BM}$ indicates the composite channel vector between the $B$ serving BSs and user $i$. The downlink ZF beamforming matrix is jointly designed by the cooperating cells as
\begin{align}
\tilde{\W}_{l} &= \G_{l} \LB \G_{l} \htp \G_{l}  \RB^{-1} = \LSB \tilde{\w}_{1l},\ldots,\tilde{\w}_{K_cl}\RSB \nonumber
\end{align}
wherein $\tilde{\w}_{il} \in \Cc^{BM}$ is the beam assigned for user $i$. Further, we normalize each beam so that $\w_{il} = \frac{\tilde{\w}_{il}}{\lv \lv \tilde{\w}_{il} \rv \rv} $ to ensure equal power assignment. We define  $\f_{ilj} = \LSB \g_{il1j} \tp,\ldots, \g_{ilBj}\tp \RSB \tp$ as the composite interference channel from the BSs in cluster $j$ to user $i$ in cluster $l$.
Therefore, the received signal at user $i$ in cluster $l$ is a sum of the intended signal jointly transmitted from a set of cooperating BSs, the inter-cluster interference, the receiver noise, and is given by
\begin{align}
y_{il} &= \underbrace{\sqrt{\frac{P_T}{K}} \g_{il} \htp \w_{il} s_{il}}_{\text{intended signal}}  + \underbrace{\sum_{j \neq l} \sum_{k = 1}^{K_c} \sqrt{\frac{P_T}{K}} \f_{ilj}\htp \w_{kj} s_{kj}}_{\text{inter-cluster interference}} + \underbrace{n_{il}}_{\text{noise}}\label{eq:sig_model_nm}
\end{align}
where $s_{il}$ is a complex scalar representing the information signal for user $i$ in cluster $l$ with $\EE \LSB \lv s_{il} \rv^2\RSB = 1$, and $n_{il} $ denotes the circularly symmetric complex additive white Gaussian noise with variance $\sigma^2$. 
Hence, the SINR of user $i$ in cluster $l$ of a network MIMO system is given by
\begin{align}
\mr{\gamma}_{il}^{\mr{NM,ZF}} &= \frac{\rho \lv \g_{il}\htp \w_{il}\rv^2}{\sum_{ j \neq l} \sum_{k} \rho \lv \f_{ilj}\htp \w_{kj}\rv^2  + 1}. \label{eq:sinr_nm1}
\end{align}

\begin{rem}
As it is evident from~\eqref{eq:sinr_ic} and~\eqref{eq:sinr_nm1}, the SINR expressions involve the power of the channel vectors projected onto the beamforming subspace. Therefore, obtaining the distribution functions of the relevant terms is essential.
\end{rem}


\section{Signal and Interference Power Distributions}\label{sec:sig_dist}
This section presents the distribution functions of both the signal power and the interference power produced by transmission of a single beam in an interfering cluster under both interference mitigation techniques.

When channel vectors are isotropic, i.e, comprising i.i.d.~components, adopting ZF beamforming leads to a tractable characterization of the distribution functions associated with the signal and interference powers in terms of Gamma random variables. For a detailed discussion on the significance of isotropic assumption and ZF beamforming for deriving these distribution functions, refer to~\cite[Theorem 1.1]{FM90} and~\cite[Proof of Theorem 1]{CKA09}. Although the isotropic condition holds true for channel vectors in an LS-MIMO system, due to the different path-loss components involved within the composite channel vector, this condition is not met in a network MIMO system. We therefore employ an approximation technique pioneered in~\cite{SHCS13} to obtain the power distribution functions in a network MIMO system. 

\subsection{Distributions of the Signal and Interference Powers in an LS-MIMO System} \label{sec:ic}
From the channel model described in Section~\ref{sec:sig_model_ic}, it follows that $\g_{ilbl} \sim \Cc \Nc \LB \zerov,\beta_{ilbl}\I_{BM}\RB$ and $\f_{ilmj} \sim \Cc \Nc \LB \zerov,\beta_{ilmj}\I_{BM}\RB$. Therefore, $\g_{ilbl}$ and $\f_{ilmj}$ are random isotropic vectors in a $BM$-dimensional vector space whose powers are a sum of $BM$ independent exponentially distributed random variables, and distributed as
\begin{align}
&\g_{ilbl} \htp \g_{ilbl} \sim \Gamma \LB BM , \beta_{ilbl}\RB \nonumber \\
&\f_{ilmj} \htp \f_{ilmj} \sim \Gamma \LB BM , \beta_{ilmj}\RB. \nonumber
\end{align}

Based on the ZF orthogonality property, the beam vector associated with user $i$ is orthogonal to the subspace spanned by the channel vectors between BS $b$ and the other $K_c - 1$ users in cluster $l$, i.e.
\begin{equation}
\w_{ibl} \perp \text{span} \LCB \g_{klbl}\RCB_{k \neq i}. \nonumber
\end{equation}
As a consequence, the signal power $\lv \g_{ilbl} \htp \w_{ibl}\rv^2$ is the power of an isotropic $BM$-dimensional random vector projected onto a $BM - K_c + 1$ dimensional beamforming space~\cite{CKA09}. In light of this discussion, we have the following lemma.

\begin{lem}
\label{dist_ic_signal}
In a $\LB C, B, BM, K_c\RB$ LS-MIMO network, the signal power of user $i$ in cluster $l$ is distributed as
\begin{equation}
\lvert \g_{ilbl}\htp \w_{ibl}\rvert^2   \sim \Gamma \LB BM - K_c + 1,\beta_{ilbl} \RB.  \label{eq:ic_intended}
\end{equation}
\end{lem}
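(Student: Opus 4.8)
The plan is to establish that the squared magnitude $\lv \g_{ilbl}\htp \w_{ibl}\rv^2$ is distributed as a Gamma random variable with shape parameter $BM - K_c + 1$ and scale parameter $\beta_{ilbl}$. The essential structural facts are already in place in the excerpt: the channel vector satisfies $\g_{ilbl} \sim \Cc\Nc(\zerov, \beta_{ilbl}\I_{BM})$, so it is an isotropic random vector in a $BM$-dimensional complex space; and, by the ZF orthogonality property, the unit beam $\w_{ibl}$ is orthogonal to the span of the $K_c - 1$ other users' channel vectors $\LCB \g_{klbl}\RCB_{k \neq i}$. The key point I would exploit is that $\w_{ibl}$ is a \emph{unit} vector lying in the $(BM - K_c + 1)$-dimensional orthogonal complement of that span, and crucially, since the beam direction depends only on the channels $\LCB \g_{klbl}\RCB_{k \neq i}$ and not on $\g_{ilbl}$ itself, the beam is statistically independent of $\g_{ilbl}$.

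First I would decompose the projection. Conditioning on the other channel vectors fixes a deterministic $(BM - K_c + 1)$-dimensional subspace $\Sc$; write $\w_{ibl}$ as a fixed unit vector in $\Sc$. Then $\g_{ilbl}\htp \w_{ibl}$ is the inner product of the isotropic Gaussian vector $\g_{ilbl}$ with a fixed unit direction. For such a vector with i.i.d.\ $\Cc\Nc(0,\beta_{ilbl})$ components, the projection onto any fixed unit vector is itself $\Cc\Nc(0,\beta_{ilbl})$, so $\lv \g_{ilbl}\htp \w_{ibl}\rv^2$ would be $\beta_{ilbl}$ times an exponential (i.e.\ $\Gamma(1,\beta_{ilbl})$). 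That alone, however, is \emph{not} the claimed distribution, which has shape $BM - K_c + 1$. The resolution is that the normalization of the beam introduces additional degrees of freedom: the correct quantity to track is the squared norm of the projection of $\g_{ilbl}$ onto the \emph{entire} subspace $\Sc$, because the ZF beam for user $i$, before normalization, points along precisely the component of $\g_{ilbl}$ that survives after removing its projections onto the other users' channel directions.

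Consequently the cleaner route, and the one I would actually carry out, is to argue directly that $\lv \g_{ilbl}\htp \w_{ibl}\rv^2$ equals in distribution the squared length of the orthogonal projection of $\g_{ilbl}$ onto a $(BM - K_c + 1)$-dimensional subspace chosen independently of $\g_{ilbl}$. By isotropy, the distribution of $\lV \P_{\Sc}\,\g_{ilbl}\rV^2$ is invariant under any deterministic choice of such a subspace, so without loss of generality I may take $\Sc$ to be the span of the first $BM - K_c + 1$ coordinate axes. Then $\P_{\Sc}\,\g_{ilbl}$ consists of $BM - K_c + 1$ i.i.d.\ $\Cc\Nc(0,\beta_{ilbl})$ entries, and its squared norm is a sum of that many independent exponential random variables each with mean $\beta_{ilbl}$, which is exactly $\Gamma(BM - K_c + 1, \beta_{ilbl})$. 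This matches \eqref{eq:ic_intended}.

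The main obstacle is justifying that $\lv \g_{ilbl}\htp \w_{ibl}\rv^2$ really coincides with $\lV \P_{\Sc}\,\g_{ilbl}\rV^2$ rather than with the single-direction projection. The care needed here is to verify that the ZF beam $\w_{ibl}$, although a unit vector, captures the full projection of $\g_{ilbl}$ onto $\Sc$: because $\tilde{\w}_{ibl}$ is the column of $\G_{bl}(\G_{bl}\htp\G_{bl})^{-1}$ corresponding to user $i$, one has $\g_{ilbl}\htp \tilde{\w}_{ibl} = 1$ by the pseudoinverse identity, and the normalization factor $1/\lV \tilde{\w}_{ibl}\rV$ then reconstructs exactly the length of the projection of $\g_{ilbl}$ onto $\Sc$. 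Making this algebraic identity precise, together with the independence of $\Sc$ from $\g_{ilbl}$ that licenses the invocation of isotropy, is the crux; once it is in hand, the Gamma conclusion follows immediately from the standard characterization of the squared norm of a complex Gaussian vector cited via \cite{CKA09}.
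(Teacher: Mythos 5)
Your proposal is correct and is essentially the paper's own argument: the paper justifies the lemma by observing that $\lv \g_{ilbl}\htp \w_{ibl}\rv^2$ is the power of an isotropic $BM$-dimensional vector projected onto the $\LB BM - K_c + 1\RB$-dimensional ZF beamforming subspace, citing~\cite{CKA09} for this fact, while you additionally supply the underlying derivation (the pseudoinverse identity $\g_{ilbl}\htp \tilde{\w}_{ibl} = 1$ forcing $\tilde{\w}_{ibl}$ to be parallel to the projection of $\g_{ilbl}$ onto $\Sc$, so that $\lv \g_{ilbl}\htp \w_{ibl}\rv^2$ equals the squared norm of that projection, followed by the isotropy argument). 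One remark: your opening claim that the beam is statistically independent of $\g_{ilbl}$ is false as stated---only the subspace $\Sc$ spanned by the other users' channels is independent of $\g_{ilbl}$, and the normalized beam itself does depend on $\g_{ilbl}$---but you identify and repair this yourself in the subsequent paragraphs, and your final argument correctly relies only on the independence of $\Sc$.
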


Moreover, the beam design in each cluster is independent of the interference channels to other clusters. Specifically, from the perspective of each user, an interfering beam lies within a one-dimensional vector space. Therefore, $\lv \f_{ilmj} \htp\w_{kmj}\rv^2$ is the power of the interfering channel vector $\f_{ilmj}$ projected onto a one-dimensional beamforming space~\cite{CKA09}. This is summarized as follows.

\begin{lem}
\label{dist_ic_intf}
In a $\LB C, B, BM, K_c\RB$ LS-MIMO network, the interference power caused by transmission of a single beam $k$ from BS $m$ in cluster $j$, when seen by user $i$ in cluster $l$, is distributed as
\begin{equation}
\lvert \f_{ilmj}\htp \w_{kmj}\rvert^2  \sim \Gamma \LB 1, \beta_{ilmj} \RB. \label{eq:ic_intf}
\end{equation}
\end{lem}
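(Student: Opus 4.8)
The plan is to exploit the statistical independence between the interfering beam $\w_{kmj}$ and the interference channel $\f_{ilmj}$, together with the rotational invariance of the isotropic vector $\f_{ilmj}$. First I would observe that the beamforming matrix of cluster $j$, and hence the unit-norm beam $\w_{kmj}$, is a deterministic function of the intra-cluster compound channel matrix $\G_{mj} = \LSB \g_{1jmj}, \ldots, \g_{K_cjmj}\RSB$ alone; it is assembled purely from the channels between the BSs of cluster $j$ and the $K_c$ users scheduled \emph{inside} cluster $j$. Since user $i$ lives in cluster $l \neq j$, its interference channel $\f_{ilmj} = \sqrt{\beta_{ilmj}}\,\h_{ilmj}$ is built from a disjoint collection of i.i.d.\ fading coefficients and is therefore independent of $\w_{kmj}$.

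Next I would condition on $\w_{kmj}$, which freezes it to a fixed unit vector with $\lv \lv \w_{kmj}\rv \rv = 1$. Writing $\f_{ilmj}\htp\w_{kmj} = \sqrt{\beta_{ilmj}}\,\h_{ilmj}\htp\w_{kmj}$, the conditional law of $\h_{ilmj}\htp\w_{kmj}$ is that of a linear functional of a vector with i.i.d.\ $\Cc\Nc(0,1)$ entries, and is thus itself zero-mean complex Gaussian with variance $\lv \lv \w_{kmj}\rv \rv^2 = 1$. Hence $\f_{ilmj}\htp\w_{kmj}$ is conditionally $\Cc\Nc(0,\beta_{ilmj})$, so that $\lvert \f_{ilmj}\htp\w_{kmj}\rvert^2$ is conditionally exponential with mean $\beta_{ilmj}$, i.e.\ distributed as $\Gamma(1,\beta_{ilmj})$.

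Finally I would remove the conditioning. Because the conditional distribution just derived does not depend on the particular orientation of $\w_{kmj}$ --- it is the projection of an isotropic $BM$-dimensional vector onto an arbitrary one-dimensional direction, which is invariant to that direction by unitary invariance --- the unconditional distribution coincides with the conditional one, yielding $\lvert \f_{ilmj}\htp\w_{kmj}\rvert^2 \sim \Gamma(1,\beta_{ilmj})$. The geometric reading is that, from user $i$'s vantage point, a single interfering beam spans only one complex dimension, and projecting an isotropic vector onto one dimension retains a single complex degree of freedom, which accounts for the shape parameter being equal to $1$.

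I expect the only delicate step to be the independence claim of the first paragraph: one must verify carefully that cluster $j$'s ZF beamformer genuinely does not involve the cross-cluster channel $\f_{ilmj}$. This is exactly what separates the present lemma from the signal-power case of Lemma~\ref{dist_ic_signal}, where $\w_{ibl}$ is deliberately correlated with the desired channel $\g_{ilbl}$ and the projection therefore retains $BM - K_c + 1$ degrees of freedom; here the absence of any such correlation collapses the effective dimension to one.
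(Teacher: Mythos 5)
Your proposal is correct and follows essentially the same route as the paper: the paper also argues that the beam design in cluster $j$ is independent of the cross-cluster channel $\f_{ilmj}$, so that the interference power is the squared projection of an isotropic vector onto a one-dimensional beamforming space, yielding $\Gamma\LB 1,\beta_{ilmj}\RB$. The only difference is that the paper invokes a cited projection result for this last step, whereas you derive it from scratch via conditioning on $\w_{kmj}$ and the unitary invariance of the Gaussian fading vector --- a harmless, self-contained elaboration of the same idea.
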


\subsection{Distributions of the Signal and Interference Powers in a Network MIMO System}\label{sec:nm}
Unlike an LS-MIMO system, the channel vectors do not consist of i.i.d. components in a network MIMO system. Specifically, the channel strengths
\begin{align}
\g_{il}\htp\g_{il} &= \sum_{b = 1}^B \beta_{ilbl}\h_{ilbl}\htp\h_{ilbl} \label{eq:chan_st_nm}\\
\f_{ilj}\htp\f_{ilj} &= \sum_{m = 1}^B \beta_{ilmj}\h_{ilmj}\htp\h_{ilmj} \label{eq:intf_st_nm} 
\end{align}
are summations of $B$ independent but non-identically distributed terms; the $b^{\mr{th}}$ term in~\eqref{eq:chan_st_nm} is distributed as $\Gamma \LB M, \beta_{ilbl}\RB$ and the $m^{\mr{th}}$ term in~\eqref{eq:intf_st_nm} is distributed as  $\Gamma \LB M, \beta_{ilmj}\RB$. However, the exact distribution of the sum of independent and non-identically distributed Gamma random variables does not yield a mathematically tractable expression. We therefore employ the second-order matching technique~\cite{HWKS11} and its consequence to obtain approximate distributions.

\begin{lem}
\label{Sum_Gamma}
Let $\LCB X_i\RCB_{i = 1}^{m}$ be a set of $m$ independent random variables such that $X_i \sim \Gamma \LB k_i,\theta_i \RB$. Then,
$Y = \sum_{i} X_i$ has the same first and second order statistics as a Gamma random variable with the shape and scale parameters given as
\begin{equation}
k = \frac{\LB \sum_i k_i \theta_i \RB^2}{\sum_i k_i \theta_i^2},~\theta = \frac{\sum_i k_i \theta_i^2}{\sum_i k_i \theta_i }.\label{eq:shape_scale}
\end{equation}
\end{lem}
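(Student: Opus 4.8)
The plan is to recognize this as a second-order (moment-) matching statement: the phrase ``same first and second order statistics'' means that we require the mean and variance of $Y$ to coincide with those of a single Gamma variable, and then solve the resulting two equations for the shape $k$ and scale $\theta$. There is no distributional claim to verify beyond these two moments, so the entire argument is an elementary algebraic computation.

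First I would record the two moments of a Gamma$(k,\theta)$ law under the scale-parameter convention used throughout the paper, namely mean $k\theta$ and variance $k\theta^2$. Next, by linearity of expectation the mean of $Y=\sum_i X_i$ is $\sum_i k_i\theta_i$, and by independence of the $X_i$ the variances add, so $\mr{Var}(Y)=\sum_i k_i\theta_i^2$. Equating these with the single-Gamma moments gives the two matching conditions
\begin{align}
k\theta &= \sum_i k_i\theta_i, \nonumber\\
k\theta^2 &= \sum_i k_i\theta_i^2. \nonumber
\end{align}

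Then I would solve this system. Dividing the second equation by the first isolates $\theta = \frac{\sum_i k_i\theta_i^2}{\sum_i k_i\theta_i}$, and back-substitution into the first yields $k = \frac{\LB\sum_i k_i\theta_i\RB^2}{\sum_i k_i\theta_i^2}$, which are precisely the expressions claimed in~\eqref{eq:shape_scale}.

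No serious obstacle arises here, since the result follows at once from the additivity of the mean and, under independence, of the variance. The only points deserving care are to state explicitly that ``first and second order statistics'' refers to the mean and variance (equivalently, the first two moments), and that the scale---rather than rate---parametrization of the Gamma distribution is in force, as this fixes the forms $k\theta$ and $k\theta^2$ for the two moments. It is also worth emphasizing that the conclusion is an \emph{approximation}: $Y$ is matched to a Gamma law in these two moments only, with no claim of exact equality in distribution, which is the precise sense in which this lemma will later be applied to the composite channel strengths in~\eqref{eq:chan_st_nm} and~\eqref{eq:intf_st_nm}.
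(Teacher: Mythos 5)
Your proof is correct and is exactly the intended argument: the paper itself states this lemma without proof (attributing the ``second-order matching technique'' to its reference [HWKS11]), and the implicit justification is precisely your computation---equate the mean $k\theta$ and variance $k\theta^2$ of a Gamma law (scale parametrization) with $\sum_i k_i\theta_i$ and, using independence, $\sum_i k_i\theta_i^2$, then solve for $\theta$ and $k$. Your closing remark that this is moment matching only, not equality in distribution, correctly captures how the paper uses the result in Approximation~1.
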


\begin{app}
\label{app_1}
As a consequence of Lemma~\ref{Sum_Gamma}, a sum of $m$ non-identically distributed Gamma random variables where the $i^{\mr{th}}$ random variable is distributed as  $X_i \sim \Gamma \LB k_i,\theta_i \RB$ can be approximated as the Gamma random variable with 
the shape and the scale parameters as given in~\eqref{eq:shape_scale}.
\end{app}

According to Approximation~\ref{app_1}, the distributions of the channel strength $\g_{il}\htp \g_{il}$ and the interference channel strength $\f_{ilj} \htp \f_{ilj}$ can be presented, respectively, as the $\Gamma \LB k_{il},\theta_{il}\RB$ distribution and the $\Gamma \LB k_{ij},\theta_{ij}\RB$ distribution wherein 
\begin{align}
k_{il} &= M \frac{\LB \sum_{b = 1}^B \beta_{ilbl} \RB^2}{ \sum_{b = 1}^B \beta_{ilbl}^2},~\theta_{il} = \frac{\sum_{b = 1}^B \beta_{ilbl}^2} {\sum_{b = 1}^B \beta_{ilbl}} \label{eq:nm_channel} \\
 k_{ij} &= M \frac{\LB \sum_{m = 1}^B \beta_{ilmj} \RB^2}{ \sum_{m = 1}^B \beta_{ilmj}^2},~\theta_{ij} = \frac{\sum_{m = 1}^B \beta_{ilmj}^2} {\sum_{m = 1}^B \beta_{ilmj}}. \label{eq:nm_intf_dist_para}
\end{align}

From~\eqref{eq:nm_channel} and~\eqref{eq:nm_intf_dist_para}, it is easy to observe that $k_{il} \leq BM$ and $k_{ij} \leq BM$ with equality if the two vectors were isotopic. In essence, each spatial dimension (i.e. each entry) of an isotropic vector adds $1$ to the shape parameter of the power distribution function. To obtain tractable distributions in a network MIMO system,~\cite{SHCS13} proposes to treat $\g_{il}$ and $\f_{ilj}$ as isotrpic vectors while each spatial dimension only contributes a fraction to the shape parameter of the associated power distribution. This approach is presented as follows.



\begin{app}\label{app_2}
The intended channel vector $\g_{il}$ can be treated as an isotropic vector distributed according to $\Cc\Nc \LB \zerov,\theta_{il}\I_{BM}\RB$ where \emph{each spatial dimension} contributes $k_{il}/{BM}$ to the shape parameter of the power distribution function with $k_{il}$ and $\theta_{il}$ as defined in~\eqref{eq:nm_channel}. Likewise, $\f_{ilj}$ can be treated as an isotropic vector distributed according to $\Cc\Nc \LB \zerov,\theta_{ij} \I_{BM}\RB$, where each spatial dimension adds $k_{ij}/BM$ to the shape parameter associated with the vector power with $k_{ij}$ and $\theta_{ij}$ as defined in~\eqref{eq:nm_intf_dist_para}.
\end{app}

Under Approximations~\ref{app_1} and~\ref{app_2}, and noting that each beam lies in a $BM - K_c +1$ dimensional space, the shape parameter associated with the distribution of the signal power $\lv \g_{il} \htp \w_{il}\rv^2$ becomes $\LB BM -K_c + 1\RB\frac{k_{il}}{BM}$. In a similar fashion, since each interfering beam spans a one-dimensional space, the shape parameter associated with the distribution of the interference power $\lv \f_{ilj} \htp \w_{kj} \rv^2$ becomes $k_{ij}/BM$. The distribution functions corresponding to the signal power and interference  power due to the transmission of a single beam are formally stated in the following two lemmas.

\begin{lem}
\label{nm_sig_power_dist}
Under Approximations~\ref{app_1} and~\ref{app_2}, the signal power of user $i$ in cluster $l$ of a $\LB C,B,M,K_c\RB$ network MIMO system is distributed as
\begin{equation}
\lv \g_{il}\htp \w_{il} \rv^2 \sim \Gamma \LB \frac{k_{il}\LB BM - K_c + 1\RB}{BM}, \theta_{il}\RB \label{eq:nm_desired}
\end{equation}
with $k_{il}$ and $\theta_{il}$ as defined in~\eqref{eq:nm_channel}.
\end{lem}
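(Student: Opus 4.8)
The plan is to obtain Lemma~\ref{nm_sig_power_dist} as a direct transcription of the LS-MIMO signal-power argument underlying Lemma~\ref{dist_ic_signal}, the only change being the \emph{per-dimension} contribution to the shape parameter supplied by Approximation~\ref{app_2}. First I would isolate the purely geometric fact, which holds \emph{exactly} and is independent of any distributional assumption: since $\w_{il}$ is the normalized ZF beam for user $i$, the identity $\G_l\htp\tilde\W_l = \I_{K_c}$ forces $\w_{il} \perp \text{span}\LCB \g_{kl}\RCB_{k \neq i}$, a subspace of dimension $K_c - 1$. Hence $\w_{il}$ lies in the complementary $BM - K_c + 1$ dimensional beamforming subspace, and $\lv \g_{il}\htp \w_{il}\rv^2$ is exactly the squared norm of the projection of $\g_{il}$ onto that subspace.

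Second, I would recall the distributional content behind Lemma~\ref{dist_ic_signal}: for a genuinely isotropic Gaussian vector with i.i.d.\ components, conditioning on the other users' channels fixes the beamforming subspace, and—by independence of $\g_{il}$ from those channels—the projected power is Gamma distributed with shape equal to the subspace dimension times the per-dimension contribution, the latter being exactly $1$ for an isotropic vector~\cite{CKA09,FM90}. This is why the LS-MIMO shape is $BM - K_c + 1$ with scale $\beta_{ilbl}$.

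Third, I would invoke Approximation~\ref{app_2}, under which $\g_{il}$ is replaced by an isotropic surrogate with covariance $\theta_{il}\I_{BM}$ but with each spatial dimension contributing $k_{il}/BM$—rather than $1$—to the shape parameter, where $k_{il}$ and $\theta_{il}$ are as in~\eqref{eq:nm_channel}. Substituting this modified weight into the dimension count of the projection gives shape $\LB BM - K_c + 1\RB k_{il}/BM$ and scale $\theta_{il}$, which is precisely~\eqref{eq:nm_desired}.

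The main obstacle is conceptual rather than computational: the projection argument yields a clean Gamma law only for a truly isotropic vector, whereas $\g_{il}$ has independent but non-identically distributed sub-blocks (coming from the distinct path-loss coefficients $\beta_{ilbl}$), so its direction is not uniform on the sphere. The entire derivation therefore rests on the legitimacy of the fractional-dimension isotropic surrogate of Approximation~\ref{app_2}, which is itself anchored in the second-order moment matching of Lemma~\ref{Sum_Gamma} and Approximation~\ref{app_1}. I would make explicit that~\eqref{eq:nm_desired} is thus an approximation rather than an exact identity, while emphasizing that the two exact ingredients—the $\LB BM - K_c + 1\RB$-dimensionality of the beamforming subspace and the independence of the beam direction from the realization of $\g_{il}$—carry over unchanged from the LS-MIMO analysis, so that only the per-dimension shape contribution is affected by the approximation.
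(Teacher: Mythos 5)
Your proposal is correct and takes essentially the same approach as the paper: the paper states this lemma without a separate proof, deriving it in the text immediately preceding it exactly as you do---the ZF beam lies in a $\LB BM - K_c + 1 \RB$-dimensional subspace orthogonal to the other users' channels, and under Approximations~\ref{app_1} and~\ref{app_2} each spatial dimension contributes $k_{il}/BM$ to the shape parameter, yielding shape $\LB BM - K_c + 1 \RB k_{il}/BM$ and scale $\theta_{il}$. One small slip in your closing sentence: the exact ingredient inherited from the LS-MIMO analysis is the independence of the beamforming \emph{subspace} (determined by the other users' channels) from $\g_{il}$, not independence of the beam direction itself, which does depend on $\g_{il}$; your second paragraph states this correctly.
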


\begin{lem}
\label{nm_intf_power_dist}
Under Approximations~\ref{app_1} and~\ref{app_2}, the interference power due to the transmission of beam $k$ in cluster $j$ at user $i$ in cluster $l$ of a $\LB C,B,M,K_c\RB$ network MIMO system is distributed as
\begin{equation}
\lv \f_{ilj}\htp \w_{kj}\rv^2 \sim \Gamma \LB \frac{k_{ij}}{BM},\theta_{ij}\RB \label{eq:nm_intf_dist} 
\end{equation}
with $k_{ij}$ and $\theta_{ij}$ as defined in~\eqref{eq:nm_intf_dist_para}.
\end{lem}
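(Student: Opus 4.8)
The plan is to reproduce the projection argument behind Lemma~\ref{dist_ic_intf} for the LS-MIMO interference power, but with the genuine isotropy of the interference channel replaced by the fractional-shape isotropy supplied by Approximation~\ref{app_2}. Indeed, this lemma is simply the one-dimensional specialization of the same projection mechanism that underlies the signal-power result in Lemma~\ref{nm_sig_power_dist}, where the beam subspace had dimension $BM-K_c+1$ rather than $1$.

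First I would establish the statistical independence of the interfering beam $\w_{kj}$ and the composite interference channel $\f_{ilj}$. The beam $\w_{kj}$ is built entirely from the intra-cluster channel matrix $\G_{j}$, i.e.\ from the channels between the BSs of cluster $j$ and the users \emph{scheduled within} cluster $j$. By contrast, $\f_{ilj}$ runs from those same BSs to user $i$, who resides in the \emph{different} cluster $l$ and is therefore absent from cluster $j$'s scheduling set. Since the small-scale fading vectors $\h$ are i.i.d.\ across distinct user indices, $\f_{ilj}$ is independent of $\w_{kj}$, so $\w_{kj}$ acts, from the viewpoint of $\f_{ilj}$, as a fixed unit vector spanning a one-dimensional subspace, exactly as in the LS-MIMO analysis. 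Next I would invoke Approximation~\ref{app_2}: $\f_{ilj}$ is treated as an isotropic vector $\Cc\Nc(\zerov,\theta_{ij}\I_{BM})$ in which each of the $BM$ spatial dimensions contributes the fraction $k_{ij}/BM$ to the shape parameter of its projected power, with $k_{ij}$ and $\theta_{ij}$ as in~\eqref{eq:nm_intf_dist_para}. This fractional accounting is precisely what reconciles the channel-strength law $\f_{ilj}\htp\f_{ilj}\sim\Gamma(k_{ij},\theta_{ij})$, obtained by applying Approximation~\ref{app_1} to~\eqref{eq:intf_st_nm}, with an isotropic representation, since summing the per-dimension contribution over all $BM$ dimensions returns $BM\cdot(k_{ij}/BM)=k_{ij}$. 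For a truly isotropic vector, projecting onto an independent $d$-dimensional subspace yields a Gamma power whose shape equals the number of retained dimensions and whose scale is the per-component variance; this is exactly the mechanism producing shape $1$ in Lemma~\ref{dist_ic_intf}. Applying the same per-dimension shape-counting under the fractional model, projecting $\f_{ilj}$ onto the single dimension spanned by $\w_{kj}$ retains exactly one spatial dimension, so $\lv\f_{ilj}\htp\w_{kj}\rv^2$ inherits shape $k_{ij}/BM$ and scale $\theta_{ij}$, giving the claimed $\Gamma(k_{ij}/BM,\theta_{ij})$ law.

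The main obstacle is the rigorous legitimacy of this last step, because the fractional-shape isotropy of Approximation~\ref{app_2} is a modeling convention rather than an exact probabilistic identity, and its validity rests entirely on the second-order matching of Approximation~\ref{app_1}. I would therefore frame the projection as an application of the per-dimension shape-counting principle articulated just before Approximation~\ref{app_2}, rather than as an exact computation, leaning on the independence established above to guarantee that the one-dimensional beam genuinely selects a single dimension's worth of the fractional shape. The orthogonal-beam treatment adopted in $A2$, whose accuracy is confirmed by the numerical results, provides the supporting justification that this accounting is faithful.
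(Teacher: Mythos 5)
Your proposal is correct and follows essentially the same route as the paper: the paper justifies this lemma in the paragraph preceding it, by invoking Approximation~\ref{app_2}'s per-dimension fractional shape accounting ($k_{ij}/BM$ per dimension, scale $\theta_{ij}$) together with the fact that an interfering beam, being designed independently of $\f_{ilj}$, spans a one-dimensional subspace from the perspective of user $i$. Your only superfluous addition is the appeal to Assumption A2 at the end --- orthogonality of the beams is needed only when summing over multiple beams (as in the proof of Theorem~\ref{epa_intf_dist}), not for the single-beam distribution claimed here.
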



Based on the presented distribution functions, the following section establishes the stochastic ordering of the signal and aggregate interference powers under the two systems. These results play a central role in conducting a comparison between LS-MIMO and network MIMO systems in terms of various performance metrics.

\section{Ordering Result and Performance Evaluation} \label{sec:epa}
A concrete evaluation of performance gains of the two systems relies on a careful statistical analysis of the achievable SINRs. The SINR, however, depends on the network configuration, i.e., the locations of the set of serving BSs, interfering BSs, and the users. The remainder of this paper assumes an identical BS deployment in the two systems and separately investigates the statistical relation between the signal powers and aggregate interference powers at a fixed user location. Tools from stochastic orders then enable us to connect the two parts and obtain a complete statistical understanding of the achievable SINRs in the two systems.

\subsection{Stochastic Ordering of the Signal Powers}\label{sec:order_intended}
This section statistically orders the signal power at a given user location in the two systems. Specifically, this result makes the intuition provided by the example in Section~\ref{sec:example} analytically concrete. Consistent with~\cite{R11}, we first define first-order stochastic dominance as follows.

\begin{definition}[First-Order Stochastic Dominance]
A random variable $X_1$ is said to be first-order stochastically dominated by a random variable $X_2$, i.e., $X_2 \geq_{st} X_1$ if and only if
\begin{equation}
\PP \LB X_2 \geq x \RB \geq \PP \LB X_1 \geq x \RB, \forall x. \nonumber
\end{equation}
Essentially, for any $x$, the value of the complementary cumulative distribution function (CCDF) associated with $X_2$ should be no smaller than that of the $X_1$.
\end{definition}

Using Lemmas~\ref{dist_ic_signal} and~\ref{nm_sig_power_dist}, the following theorem establishes the signal power first-order stochastic dominance across the two systems.

\begin{thm}
\label{sig_order}
Under Approximations~\ref{app_1} and~\ref{app_2}, the signal power of each user in a $\LB C, B, BM, K_c\RB$ LS-MIMO system first-order stochastically dominates the signal power of the same user in a $\LB C, B, M, K_c\RB$ network MIMO system.
\end{thm}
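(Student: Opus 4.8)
The plan is to read off the two laws from the earlier lemmas and reduce the statement to a comparison of Gamma parameters. By Lemma~\ref{dist_ic_signal} the LS-MIMO signal power is $\Gamma\LB k_1,\theta_1\RB$ with $k_1 = BM - K_c + 1$ and $\theta_1 = \beta_{ilbl}$, while by Lemma~\ref{nm_sig_power_dist} the network MIMO signal power is $\Gamma\LB k_2,\theta_2\RB$ with $k_2 = \frac{k_{il}\LB BM - K_c + 1\RB}{BM}$ and $\theta_2 = \theta_{il}$. Since first-order stochastic dominance between two Gamma laws follows once both the shape and the scale of the first are no smaller than those of the second, it suffices to establish $k_1 \geq k_2$ and $\theta_1 \geq \theta_2$ and then invoke this monotonicity.

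I would first record the monotonicity fact itself through two coupling steps. For a common scale, additivity of independent Gamma variables sharing a scale gives $\Gamma\LB k_1,\theta\RB \stackrel{d}{=} \Gamma\LB k_2,\theta\RB + \Gamma\LB k_1 - k_2,\theta\RB$ with independent summands whenever $k_1 \geq k_2$; the nonnegative increment couples the two laws so that the first exceeds the second pointwise, yielding $\Gamma\LB k_1,\theta\RB \geq_{st} \Gamma\LB k_2,\theta\RB$. For a common shape, $\Gamma\LB k,\theta_1\RB \stackrel{d}{=} \frac{\theta_1}{\theta_2}\,\Gamma\LB k,\theta_2\RB$, and $\theta_1/\theta_2 \geq 1$ again produces a pointwise coupling, so $\Gamma\LB k,\theta_1\RB \geq_{st} \Gamma\LB k,\theta_2\RB$. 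Chaining these and using transitivity of $\geq_{st}$ gives $\Gamma\LB k_1,\theta_1\RB \geq_{st} \Gamma\LB k_2,\theta_2\RB$ whenever $k_1 \geq k_2$ and $\theta_1 \geq \theta_2$.

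It then remains to verify the two parameter inequalities. For the shape, since $BM - K_c + 1 \geq 1 > 0$ (because $K_c = BK \leq BM$), dividing $k_1 \geq k_2$ through by $BM - K_c + 1$ reduces the claim to $k_{il} \leq BM$, which holds by the Cauchy--Schwarz inequality $\LB \sum_b \beta_{ilbl}\RB^2 \leq B \sum_b \beta_{ilbl}^2$ applied to the definition of $k_{il}$ in~\eqref{eq:nm_channel}. For the scale, the claim $\theta_1 \geq \theta_2$ reads $\beta_{ilbl} \geq \frac{\sum_b \beta_{ilbl}^2}{\sum_b \beta_{ilbl}}$; the right-hand side is a $\beta$-weighted average of the coefficients $\LCB \beta_{ilbl}\RCB$ and is therefore bounded above by $\max_b \beta_{ilbl}$, which equals $\beta_{ilbl}$ precisely because the serving BS in the LS-MIMO system is the one closest to the user and thus carries the largest path-loss coefficient.

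The main obstacle is the scale inequality: it rests essentially on the modeling assumption that in the LS-MIMO system each user is associated with its closest (largest-$\beta$) BS; without this association the $\beta$-weighted average could exceed the serving coefficient and the dominance would fail. A secondary point requiring care is that the network MIMO shape $k_2$ is generally non-integer, so the shape-increase coupling must rely on Gamma additivity for arbitrary positive real shapes rather than on an integer convolution argument.
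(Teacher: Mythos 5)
Your proposal is correct and follows essentially the same route as the paper's own proof: both reduce the claim to comparing the Gamma parameters from Lemmas~\ref{dist_ic_signal} and~\ref{nm_sig_power_dist}, establishing the shape inequality via $k_{il} \leq BM$ and the scale inequality via the fact that $\theta_{il}$ is a weighted average of path-loss coefficients dominated by that of the closest (serving) BS, and then invoking monotonicity of the Gamma CCDF in both parameters. The only difference is that you actually prove this monotonicity through explicit couplings (Gamma additivity for the shape, rescaling for the scale), whereas the paper simply asserts that the CCDF of a Gamma random variable is increasing in its parameters -- a worthwhile bit of added rigor, particularly since the network MIMO shape is non-integer, as you note.
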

\begin{IEEEproof}
See Appendix~\ref{sec:app0}.
\end{IEEEproof}

\begin{rem}
Theorem~\ref{sig_order} relies on the Approximations~\ref{app_1} and~\ref{app_2} to obtain the distribution function of the signal power in a network MIMO system. The accuracy of these approximations is confirmed through numerical simulations in Section~\ref{sec:numerical}.
\end{rem}

Theorem~\ref{sig_order} states that, unless a user is equidistant from its set of serving BSs in a network MIMO system (in which case the two systems perform similarly), an LS-MIMO system provides statistically stronger signal power. In particular, this theorem illustrates that the main disadvantage of a network MIMO system is that users receive their intended signals from multiple, scattered, BSs. This, in turn, leads to an additional penalty in terms of signal power for each user in a network MIMO system.

As the second step toward SINR stochastic ordering, the following section studies the aggregate interference power distribution under both systems.

\subsection{Distribution of the Aggregate Interference Powers}\label{sec:order_intf}
Since the two interference mitigation approaches generate their ZF beams based on different channel matrices, one expects distinct interference patterns to be created in the two systems. However, it is essential to note that a user sees the \emph{same number of interfering beams} initiated from \emph{the same set of BS locations} in both systems. Further,  as the channel vectors and therefore the ZF beams are isotropic in LS-MIMO systems and (approximately so) in network MIMO systems, their transmission directions are uniformly distributed. This section shows that these facts lead to aggregate interference powers that are equal, in distribution, at each user location under the two systems. 

Let the aggregate interference power created by an LS-MIMO system and a network MIMO system experienced by user $i$ in cluster $l$, respectively, be given as
\begin{align}
I_{il}^{\mr{LSM}} &= \sum_{j \neq l} \sum_{m = 1}^{B} \sum_{k = 1}^{K} \lv \f_{ilmj} \htp \w_{kmj}\rv^2 \label{eq:agg_intf_ic} \\
I_{il}^{\mr{NM}} &= \sum_{j \neq l} \sum_{k = 1}^{K_c} \lv \f_{ilj} \htp \w_{kj}\rv^2. \label{eq:agg_intf_nm}
\end{align}
The following theorem formally establishes the aggregate interference power distribution equivalence in the two systems.

\begin{thm}\label{epa_intf_dist}
Under Approximations~\ref{app_1} and~\ref{app_2}, the aggregate interference power produced by a $\LB C, B, BM, K_c\RB$ LS-MIMO system and a $\LB C, B, M, K_c\RB$ network MIMO system at each user location are equal in distribution.
\end{thm}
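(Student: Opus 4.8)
The plan is to prove that the two aggregate interference powers are equal in distribution by comparing them term-by-term and showing that each summand in the LS-MIMO expression matches a corresponding summand in the network-MIMO expression. The key structural observation, emphasized in the paragraph preceding the theorem, is that both systems have exactly the same set of interfering BS locations and the same total number of interfering beams: an interfering cluster $j\neq l$ contributes $K$ beams from each of its $B$ base-stations in the LS-MIMO case (hence $BK=K_c$ beams per cluster), and contributes $K_c$ jointly-designed beams in the network-MIMO case. So the total count matches. I would then argue that matching the distributions reduces to matching the per-beam interference contributions.

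\textbf{Step 1: Match the per-beam distributions.} First I would invoke Lemma~\ref{dist_ic_intf} and Lemma~\ref{nm_intf_power_dist} to write down the distribution of a single interfering beam's contribution in each system. In the LS-MIMO case, each term $\lv \f_{ilmj}\htp \w_{kmj}\rv^2 \sim \Gamma(1,\beta_{ilmj})$, i.e.\ a single exponential with the per-BS path-loss $\beta_{ilmj}$. In the network-MIMO case, each term $\lv \f_{ilj}\htp \w_{kj}\rv^2 \sim \Gamma(k_{ij}/BM,\theta_{ij})$. The heart of the argument is that, when summed over all beams and all interfering clusters, these two collections of Gamma variables yield the same aggregate law.

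\textbf{Step 2: Aggregate within a cluster.} For a fixed interfering cluster $j$, the LS-MIMO contribution is a sum of $BK$ independent $\Gamma(1,\beta_{ilmj})$ variables, grouped by the $B$ base-stations: for BS $m$ there are $K$ i.i.d.\ $\Gamma(1,\beta_{ilmj})$ terms, which sum to a $\Gamma(K,\beta_{ilmj})$ variable. The network-MIMO contribution from cluster $j$ is a sum of $K_c=BK$ independent $\Gamma(k_{ij}/BM,\theta_{ij})$ variables, giving $\Gamma(BK\cdot k_{ij}/BM,\theta_{ij}) = \Gamma(K k_{ij}/M,\theta_{ij})$. I would then apply Approximation~\ref{app_1} (second-order matching) in reverse: the LS-MIMO per-cluster sum $\sum_{m=1}^B \Gamma(K,\beta_{ilmj})$ is a sum of independent non-identical Gammas, which by Lemma~\ref{Sum_Gamma} has matched shape and scale
\[
\hat k = \frac{\LB \sum_m K\beta_{ilmj}\RB^2}{\sum_m K\beta_{ilmj}^2}
= K\,\frac{\LB \sum_m \beta_{ilmj}\RB^2}{\sum_m \beta_{ilmj}^2},
\qquad
\hat\theta = \frac{\sum_m \beta_{ilmj}^2}{\sum_m \beta_{ilmj}}.
\]
Comparing with~\eqref{eq:nm_intf_dist_para}, this is exactly $\hat k = K k_{ij}/M$ and $\hat\theta=\theta_{ij}$, which matches the network-MIMO per-cluster parameters from Step~1. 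So, under the same second-order-matching approximation used to define the network-MIMO distributions, the two per-cluster aggregate interference powers have identical (approximate) Gamma laws.

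\textbf{Step 3: Independence across clusters and conclude.} Finally I would note that interfering clusters are spatially disjoint and the associated channel vectors $\{\f_{ilmj}\}$ (resp.\ $\{\f_{ilj}\}$) and beams are independent across $j$ in both systems, so the total aggregate interference is a sum of independent per-cluster contributions whose laws coincide cluster-by-cluster; hence $I_{il}^{\mr{LSM}} \stackrel{d}{=} I_{il}^{\mr{NM}}$. \textbf{The main obstacle} I anticipate is conceptual honesty about the approximations: the equivalence is exact only in the sense that both sides are computed under Approximation~\ref{app_1}, so the LS-MIMO per-cluster sum must itself be passed through the same Gamma-matching step (otherwise the LS side is an exact convolution of Gammas while the NM side is already an approximation). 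The crux is therefore verifying that the moment-matched parameters of $\sum_{m}\Gamma(K,\beta_{ilmj})$ coincide \emph{identically} with $(K k_{ij}/M,\theta_{ij})$ — which the algebra above confirms — and being careful that the factor $K$ from having $K$ beams per BS is what converts the LS-MIMO path-loss structure into exactly the network-MIMO shape parameter.
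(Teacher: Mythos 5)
Your proposal is correct and follows essentially the same route as the paper's own proof: decompose the aggregate interference per interfering cluster, use Assumption A$2$ to sum the per-beam Gamma terms exactly within each BS (LS-MIMO) and within each cluster (network MIMO), apply Approximation~\ref{app_1} to the LS-MIMO sum of $B$ non-identical $\Gamma\LB K,\beta_{ilmj}\RB$ variables, verify that the matched parameters coincide with $\LB K k_{ij}/M, \theta_{ij}\RB$, and conclude by independence across clusters. Your closing caveat --- that the equality holds only with both sides passed through the same second-order matching --- is exactly the sense in which the paper's Theorem~\ref{epa_intf_dist} is stated (``Under Approximations~\ref{app_1} and~\ref{app_2}''), so nothing is missing.
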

\begin{IEEEproof}
See Appendix~\ref{sec:app1}.
\end{IEEEproof}

\begin{rem}\label{intf_app}
Theorem~\ref{epa_intf_dist} relies on the following three approximations. First, a network MIMO interference channel vector is approximated by an isotropic vector as stated in Approximation~\ref{app_2}. Second, as mentioned in Assumption A$2$,  the ZF beams transmitted by each interfering BS of an LS-MIMO system and also the ZF beams designed in each interfering cluster of a network MIMO system are treated as orthogonal vectors. Based on this assumption, the $K$ interference signals produced by each individual BS of an LS-MIMO system and the $K_c$ interference signals produced by each cluster of a network MIMO system become independent. This is essential in obtaining the related distribution functions. Finally, Approximation~\ref{app_1} is used to obtain a tractable distribution function of the aggregate interference power created by each individual cluster of an LS-MIMO system. The simulation results provided in Section~\ref{sec:numerical} show that these approximations are very accurate and confirm the validity of our result.
\end{rem}

\begin{rem}
It is important to note that both Theorems~\ref{sig_order} and~\ref{epa_intf_dist} rely on equal power assignment across the downlink ZF beams. Specifically, since a fraction of the total power allocated to each beam is deterministic, equal power allocation does not affect the distribution functions of the signal and interference powers.
\end{rem}

\subsection{SINR Stochastic Ordering}
Theorems~\ref{sig_order} and~\ref{epa_intf_dist}, respectively, present the first-order stochastic dominance of the signal power and equivalence, in distribution, of the interference power at each user location across the two interference mitigation techniques. The following theorem incorporates these two results and establishes the stochastic dominance of the achievable SINRs under both systems. 

%
%

\begin{thm}\label{sinr_order}
Under Approximations~\ref{app_1} and~\ref{app_2}, the SINR of each user in a $\LB C, B, BM, K_c\RB$ LS-MIMO system first-order stochastically dominates the SINR of the same user in a $\LB C, B, M, K_c\RB$ network MIMO system.
\end{thm}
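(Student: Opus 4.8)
The plan is to exploit the two preceding results together with the monotonicity structure of the SINR. Write $S^{\mr{LSM}} = \lv \g_{ilbl}\htp \w_{ibl}\rv^2$ and $S^{\mr{NM}} = \lv \g_{il}\htp \w_{il}\rv^2$ for the signal powers, and $I^{\mr{LSM}}, I^{\mr{NM}}$ for the aggregate interference powers in~\eqref{eq:agg_intf_ic} and~\eqref{eq:agg_intf_nm}. Each SINR in~\eqref{eq:sinr_ic} and~\eqref{eq:sinr_nm1} is the value at $(S,I)$ of the single function $g(s,t) = \frac{\rho s}{\rho t + 1}$, which is strictly increasing in $s$ and decreasing in $t$. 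Theorem~\ref{sig_order} supplies $S^{\mr{LSM}} \geq_{st} S^{\mr{NM}}$ and Theorem~\ref{epa_intf_dist} supplies $I^{\mr{LSM}} \stackrel{d}{=} I^{\mr{NM}}$; the goal is to glue these into $\gamma^{\mr{LSM}} \geq_{st} \gamma^{\mr{NM}}$.

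First I would record the independence that makes the combination possible: in each system the signal gain is determined by the intra-cluster channel matrix used to build the serving beam ($\G_{bl}$ in LS-MIMO, $\G_l$ in network MIMO), whereas the aggregate interference depends only on the out-of-cluster interference channels $\f_{ilmj}$, $\f_{ilj}$ ($j \neq l$) and the beams of the interfering clusters. Since channels in distinct clusters are statistically independent, $S$ and $I$ are independent random variables within each system.

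Next I would condition on the interference. For any threshold $x$, the event $g(s,t) \geq x$ is equivalent to $s \geq \frac{x(\rho t + 1)}{\rho}$, so independence yields
\[
\PP\LB \gamma^{\mr{LSM}} \geq x\RB = \int \bar F_{S^{\mr{LSM}}}\!\LB \tfrac{x(\rho t + 1)}{\rho}\RB \, d\mu(t),
\]
and the analogous identity holds for the network MIMO system, where $\mu$ is the common law of the interference power guaranteed by Theorem~\ref{epa_intf_dist} and $\bar F$ denotes the CCDF. Theorem~\ref{sig_order} states precisely that $\bar F_{S^{\mr{LSM}}}(c) \geq \bar F_{S^{\mr{NM}}}(c)$ for every $c$, so the two integrands are ordered pointwise in $t$; integrating both against the \emph{same} measure $\mu$ then preserves the inequality and gives $\PP(\gamma^{\mr{LSM}} \geq x) \geq \PP(\gamma^{\mr{NM}} \geq x)$ for all $x$, which is the claim.

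The crux of the argument, and the point I expect to need the most care, is that the two ingredients can be combined only because the interference laws coincide \emph{exactly}: the signal ordering is applied at the random threshold $\frac{x(\rho t + 1)}{\rho}$, and it is essential that this threshold be averaged over one and the same distribution $\mu$ in both systems. Equivalently, I may phrase this through the coupling characterization of first-order dominance: couple $S^{\mr{LSM}} \geq S^{\mr{NM}}$ almost surely with the correct marginals, draw a single $I \sim \mu$ independent of the coupled pair, and invoke monotonicity of $g$ in its first argument to obtain $g(S^{\mr{LSM}}, I) \geq g(S^{\mr{NM}}, I)$ almost surely. The only genuine subtlety beyond this is verifying the within-system independence of $S$ and $I$ cleanly, since both the serving beam and the interfering beams are functions of (disjoint blocks of) the underlying channel matrices.
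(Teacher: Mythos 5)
Your proposal is correct and follows essentially the same route as the paper's own proof: condition on the interference (whose law coincides in the two systems by Theorem~\ref{epa_intf_dist}), apply the signal-power CCDF ordering of Theorem~\ref{sig_order} at the random threshold $\gamma_0(\rho t+1)/\rho$, and integrate against the common interference distribution. The only difference is that you explicitly verify the within-system independence of signal and interference powers, a point the paper uses implicitly in its conditioning step; this is a welcome addition of rigor, not a departure in method.
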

\begin{IEEEproof}
See Appendix~\ref{sec:app2}.
\end{IEEEproof}

As an immediate consequence of Theorem~\ref{sinr_order}, we can compare the two systems in terms of coverage probability. The coverage probability is defined as follows.
\begin{definition}[Coverage Probability]
The coverage probability of a user is defined as the probability that its SINR exceeds a pre-determined threshold $\eta_i$, i.e.
\begin{equation}
\PP \LB \gamma_i \geq \eta_i \RB. \nonumber
\end{equation}
This threshold is set based on the minimum required QoS for a given user and is chosen by the upper-layer mechanisms.
\end{definition}

Therefore, Theorem~\ref{sinr_order} implicitly shows that, for every choice of QoS threshold, an LS-MIMO system outperforms a comparable network MIMO system in terms of user coverage probability.

\subsection{Performance Evaluation}
In addition to the coverage probability, Theorem~\ref{sinr_order} can be used to evaluate the performance gains of the two systems under a general class of utility functions. For this purpose, investigating the stochastic dominance of functionals of random variables is required. Therefore, before proceeding to the main result of this section, we state the following lemma.

\begin{lem}
\label{func_order}
Let $X$ and $Y$ be two random variables such that $X \geq_{st} Y$. Then, for any non-decreasing function $g(\cdot)$, it follows that~\cite{R11}
\begin{equation}
 g \LB X \RB \geq_{st}   g \LB Y\RB. \nonumber
\end{equation}
\end{lem}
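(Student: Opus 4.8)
The plan is to work directly from the definition of first-order stochastic dominance in terms of complementary cumulative distribution functions (CCDFs). By the definition stated earlier, the hypothesis $X \geq_{st} Y$ means $\PP \LB X \geq x \RB \geq \PP \LB Y \geq x \RB$ for every $x$, and the goal is to establish the analogous inequality $\PP \LB g(X) \geq t \RB \geq \PP \LB g(Y) \geq t \RB$ for every threshold $t$.

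First I would fix an arbitrary $t$ and examine the preimage set $A_t = \LCB x \in \RR : g(x) \geq t \RCB$. The key structural observation is that, because $g$ is non-decreasing, $A_t$ is an \emph{upper set}: if $x_0 \in A_t$ and $x \geq x_0$, then $g(x) \geq g(x_0) \geq t$, so $x \in A_t$. Consequently $A_t$ is either empty, all of $\RR$, or a half-line $[a_t, \infty)$ or $(a_t, \infty)$ with $a_t = \inf A_t$. In the degenerate cases ($A_t = \emptyset$ or $A_t = \RR$) both CCDFs equal $0$ or $1$ and the inequality holds trivially.

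Next I would treat the non-degenerate half-line cases by rewriting the event $\LCB g(X) \geq t \RCB$ as $\LCB X \in A_t \RCB$. When $A_t = [a_t, \infty)$ this gives $\PP \LB g(X) \geq t \RB = \PP \LB X \geq a_t \RB$, and the hypothesis applied at $x = a_t$ immediately yields the desired dominance. The main (and essentially only) subtlety lies in the open case $A_t = (a_t, \infty)$, where the event becomes $\LCB X > a_t \RCB$ rather than $\LCB X \geq a_t \RCB$; here I would invoke right-continuity from above, namely $\PP \LB X > a_t \RB = \lim_{x \downarrow a_t} \PP \LB X \geq x \RB$, and apply the hypothesis at each $x > a_t$ before passing to the limit, obtaining $\PP \LB X > a_t \RB \geq \PP \LB Y > a_t \RB$. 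This limiting step is where one must be careful to reconcile the strict inequality defining $A_t$ with the non-strict inequality appearing in the definition of $\geq_{st}$.

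An alternative, arguably cleaner, route is the coupling characterization: $X \geq_{st} Y$ holds if and only if one can construct $\hat X \stackrel{d}{=} X$ and $\hat Y \stackrel{d}{=} Y$ on a common probability space with $\hat X \geq \hat Y$ almost surely. Monotonicity of $g$ then forces $g(\hat X) \geq g(\hat Y)$ almost surely, and since $g(\hat X) \stackrel{d}{=} g(X)$ and $g(\hat Y) \stackrel{d}{=} g(Y)$, the conclusion $g(X) \geq_{st} g(Y)$ follows at once. This avoids the boundary bookkeeping entirely, at the cost of invoking the representation theorem; given that the statement is attributed to~\cite{R11}, either route suffices.
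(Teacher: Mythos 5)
Your proposal is correct, but there is nothing in the paper to compare it against: the paper does not prove this lemma at all. The result is quoted verbatim from the stochastic-orders literature (the citation~\cite{R11} is embedded in the statement itself), and its only role in the paper is as an ingredient in the proof of Theorem~\ref{average_utility}. So your proposal supplies a proof where the authors supply a pointer.

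Both of your routes are sound, and they are in fact the two standard textbook proofs. The upper-set argument is complete as you describe it: since $g$ is non-decreasing, $A_t = \{x : g(x) \geq t\}$ is an upper set, hence $\emptyset$, $\RR$, $[a_t,\infty)$, or $(a_t,\infty)$, and the only delicate case is the open half-line, which you resolve correctly via $\PP ( X > a_t ) = \lim_{x \downarrow a_t} \PP ( X \geq x )$ and a limit of the hypothesis inequalities. One terminological quibble: that identity is continuity from below of the probability measure applied to the increasing union $\{X > a_t\} = \bigcup_n \{X \geq a_t + 1/n\}$, not ``right-continuity from above''; the mathematics is right even if the name is not. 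The coupling route is equally valid: the quantile construction $\hat X = F_X^{-1}(U)$, $\hat Y = F_Y^{-1}(U)$ with $U$ uniform on $[0,1]$ realizes $\hat X \geq \hat Y$ almost surely precisely when $X \geq_{st} Y$, and monotonicity of $g$ then gives the conclusion in one line, at the cost of invoking that representation theorem. Either argument would serve as a self-contained replacement for the citation; the upper-set proof is the more elementary of the two, needing nothing beyond the definition and countable additivity.
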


Let $U_{il}$ denote the utility function associated with user $i$ in cluster $l$. We further assume that $U_{il}$ is non-decreasing in SINR. The following theorem compares the two systems with respect to the achievable utility of a given user, when averaged over small-scale fading.


\begin{thm}\label{average_utility}
Under Approximations~\ref{app_1} and~\ref{app_2}, when averaged over small-scale fading, each user in a $\LB C, B, BM, K_c\RB$ LS-MIMO system achieves better utility than in a $\LB C, B, M, K_c\RB$ network MIMO system.
\end{thm}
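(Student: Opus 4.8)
The plan is to chain the SINR ordering already in hand through the monotonicity of the utility and then invoke the elementary fact that first-order stochastic dominance implies an ordering of means. First I would recall from Theorem~\ref{sinr_order} that, under Approximations~\ref{app_1} and~\ref{app_2},
\begin{equation}
\gamma_{ibl}^{\mr{LSM,ZF}} \geq_{st} \gamma_{il}^{\mr{NM,ZF}} \nonumber
\end{equation}
for each user $i$, where the fading realizations are the underlying source of randomness.

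Next, since $U_{il}$ is by assumption non-decreasing in the SINR, I would apply Lemma~\ref{func_order} with $g = U_{il}$ to transfer the dominance from the SINRs to the utilities, obtaining
\begin{equation}
U_{il} \LB \gamma_{ibl}^{\mr{LSM,ZF}} \RB \geq_{st} U_{il} \LB \gamma_{il}^{\mr{NM,ZF}} \RB. \nonumber
\end{equation}

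Finally, I would convert this stochastic dominance into an inequality between expectations, which is precisely the average over small-scale fading called for in the statement. Writing the mean of a random variable $Z$ through its CCDF as
\begin{equation}
\EE \LSB Z \RSB = \int_0^{\infty} \PP \LB Z \geq z \RB dz - \int_{-\infty}^0 \PP \LB Z \leq z \RB dz, \nonumber
\end{equation}
the defining property of first-order stochastic dominance—that the CCDF of the dominating variable lies pointwise above that of the dominated one—shows that each integrand for $U_{il}(\gamma_{ibl}^{\mr{LSM,ZF}})$ dominates the corresponding integrand for $U_{il}(\gamma_{il}^{\mr{NM,ZF}})$, whence
\begin{equation}
\EE \LSB U_{il} \LB \gamma_{ibl}^{\mr{LSM,ZF}} \RB \RSB \geq \EE \LSB U_{il} \LB \gamma_{il}^{\mr{NM,ZF}} \RB \RSB. \nonumber
\end{equation}
This is exactly the claim that, averaged over small-scale fading, the LS-MIMO user achieves at least as large a utility.

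The argument is essentially immediate once Theorem~\ref{sinr_order} is available, so I expect no genuine obstacle; the only point requiring a little care is the last step, where one must ensure the expectations are well defined. For utilities bounded below—such as rate or any non-negative QoS metric—the integral representation applies directly, and in the general case one needs only integrability of $U_{il}$ under the fading distribution, which holds for the utilities of practical interest. I would therefore state the result for such $U_{il}$ and note that the conclusion specializes immediately to ergodic rate (taking $U_{il}(\gamma)=\log_2(1+\gamma)$) and to expected weighted sum rate.
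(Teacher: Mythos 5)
Your proposal is correct and follows essentially the same route as the paper's proof: invoke Theorem~\ref{sinr_order}, transfer the dominance through Lemma~\ref{func_order} using the monotonicity of $U_{il}$, and convert to an inequality of expectations via the CCDF integral representation. The only difference is cosmetic—you use the two-sided representation $\EE[Z]=\int_0^{\infty}\PP(Z\geq z)\,dz-\int_{-\infty}^{0}\PP(Z\leq z)\,dz$ and note integrability, whereas the paper restricts to non-negative utilities and uses $\EE[X]=\int_{t>0}\PP(X>t)\,dt$ directly.
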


\begin{proof}
From Theorem~\ref{sinr_order}, in conjunction with Lemma~\ref{func_order}, it follows that for any $t$
\begin{equation}
\PP \LB U_{il} \LB \gamma_{ibl}^{\mr{LSM,ZF}}\RB \geq t \RB \geq \PP \LB U_{il} \LB \gamma_{il}^{\mr{NM,ZF}}\RB  \geq t\RB. \nonumber
\end{equation}

Noting that for any non-negative random variable $X$, $\EE \LB X \RB =  \int_{t > 0} \PP \LB X > t \RB \mr{d}t$, we have that
\begin{align}
 \EE_{\h} \LCB U_{il} \LB \gamma_{ibl}^{\mr{LSM,ZF}} \RB \RCB &= \int_{t \geq 0} \PP \LB U_{il} \LB \gamma_{ibl}^{\mr{LSM,ZF}} \RB > t \RB \mr{d}t \nonumber \\
&\geq \int_{t > 0} \PP \LB U_{il} \LB \gamma_{il}^{\mr{NM,ZF}} \RB \geq t \RB \mr{d}t \nonumber \\
& = \EE_{\h} \LCB U_{il} \LB \gamma_{il}^{\mr{NM,ZF}} \RB \RCB. \nonumber
\end{align}
This completes the proof.
\end{proof}

Theorem~\ref{average_utility} holds true for a wide range of utility functions including some of the key performance metrics in wireless networks. In particular, letting $U_{il} \LB \gamma_{il} \RB = \log_2 \LB 1 + \gamma_{il} \RB$ wherein $\gamma_{il}$ denotes the SINR, it follows that an LS-MIMO system provides performance improvement in terms of user ergodic rate as compared to a network MIMO system.

Moreover, we define the expected achievable weighted sum-rate in cluster $l$ as
\begin{equation}
R_l \LB \LCB \gamma_{il}\RCB_i\RB = \EE_{\h} \LCB \sum_{i} \psi_{il} \log_2 \LB 1 +\gamma_{il} \RB \RCB,~\forall l\nonumber
\end{equation}
where $\psi_{il}$'s are non-negative constant weight factors introduced by the round-robin scheduler to prioritize users' service rates according to their application. Therefore, regardless of the interference mitigation technique employed in a network, the same set of users are selected in both systems during each given time-slot. Based on Theorem~\ref{average_utility}, it is easy to conclude that the expected weighted sum rate provided in each cluster of  a $\LB C, B, BM, K_c\RB$ LS-MIMO system is greater than in a $\LB C, B, M, K_c\RB$ network MIMO system.

\section{Numerical Validation} \label{sec:numerical}
This section presents the numerical results to support our conclusions made in the preceding sections. We further show that, with RZF beamforming, an LS-MIMO system is superior to a comparable network MIMO system as well.

We consider a multicell wireless cellular network where cooperating clusters are formed using a square lattice, each consisting of $B$ BSs located on a grid. During each time-slot, each BS schedules $K$ single-antenna users from within its cell area to be served. Further, in order to avoid boundary effects, we consider a wrap-around topology such that each individual cluster has eight neighboring clusters, and focus only on the cluster located in the center of the lattice. The available bandwidth is reused across the cooperating groups with a frequency reuse factor of one; simultaneous downlink transmissions therefore produce inter-cluster interference. One realization of such a network topology, with $C = 9$, $B = 4$, $K = 5$, and an inter-BS distance of $500$ meters, is depicted in Fig.~\ref{fig:network_topology}.
\begin{figure}[!t]
\centering
\includegraphics[scale = 0.44]{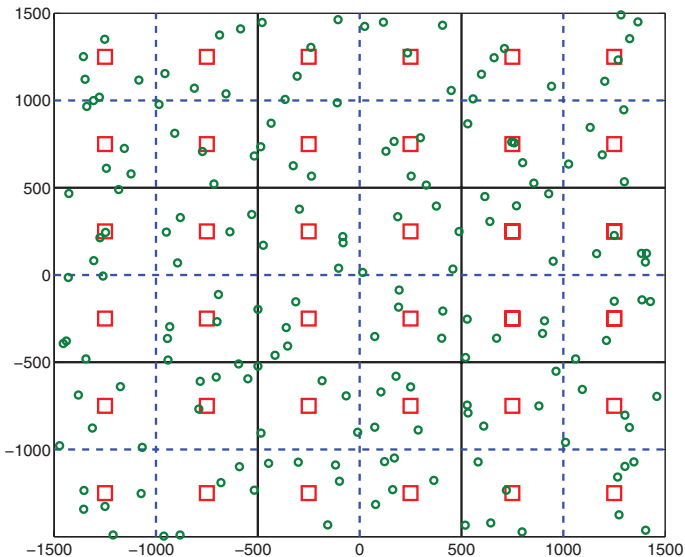}
\caption{A random snapshot of a network with $C  =9$, $B = 4$, and $K = 5$. The square and circle markers denote the cooperating BSs in each cluster and scheduled users, respectively. The solid lines show the cluster coverage area, and the dashed lines denote cell boundaries.}
\label{fig:network_topology}
\end{figure}

Each BS is assumed to be equipped with $M = 5, 6, 7$ transmit antennas in the network MIMO system which correspond, respectively, to $\zeta = 1,5,9$ per user. (Recall that $\zeta = B \LB M - K \RB + 1$ and denotes the spatial DoF per user.) In order to achieve the same number of spatial DoF per user, each BS is equipped with $BM = 20, 24, 28$ transmit antennas in the LS-MIMO system. The channel model captures the effects of the distance-dependent path-loss and Rayleigh small-scale channel fading. The network parameters are summarized in Table~\ref{table:parameters}.

\begin{table}[!t]
\caption{Network Design Parameters}
\centering
    \begin{tabular}{|c| c|}
    \hline
    Number of clusters & $C = 9$ \\ \hline
    Number of cooperating BSs per cluster & $B = 4$ \\ \hline
    Total number of users per cell & $K_T = 60$ \\ \hline
    Number of scheduled users per cell& $K = 5$ \\ \hline
    Total bandwidth & $W = 20$ MHz \\ \hline
    BS Max available power & $43$ dBm \\ \hline
    Cluster side length & $L =  1000$ m  \\ \hline
    Path-loss exponent & $\alpha = 3.5$ \\ \hline
    Background noise & $N_o = -174$ dBm/Hz \\ \hline
         \end{tabular}
    \label{table:parameters}
\end{table}

$1)$ \emph{Signal power stochastic dominance}: Fig.~\ref{fig:eff_chan_ccdf_center} and Fig.~\ref{fig:eff_chan_ccdf_corner} present the CCDF of the signal power for two user locations. First, we consider a user located at $\LB 15 \mr{m},15  \mr{m}\RB$, almost equidistant from the coordinated BSs. Second, the user is assumed to be closer to one of the BSs, located at $\LB 235 \mr{m},235 \mr{m} \RB$. While we fix the user location under consideration, the results are averaged over small-scale fading in the LS-MIMO system\footnote{Note that the ZF beams designed by each BS in an LS-MIMO system are only dependent on the small-scale channel fading between the BS and the set of scheduled users in its associated cluster.} and are averaged over the locations of the remaining users and small-scale fading realizations in the network MIMO system.

As shown in both figures and expected from Theorem~\ref{sig_order}, the CCDF of the signal powers at both user locations under the LS-MIMO system dominate those of the network MIMO system for different choices of $\zeta$. Further, as it is evident from both figures, the gap between the corresponding curves increases as the user moves closer to one of the BSs, i.e., the disparity in the distances between the user location and the set of serving BSs statistically degrades the effective signal power under the network MIMO system.

\begin{figure}[!t]
\centering
\includegraphics[scale = 0.44]{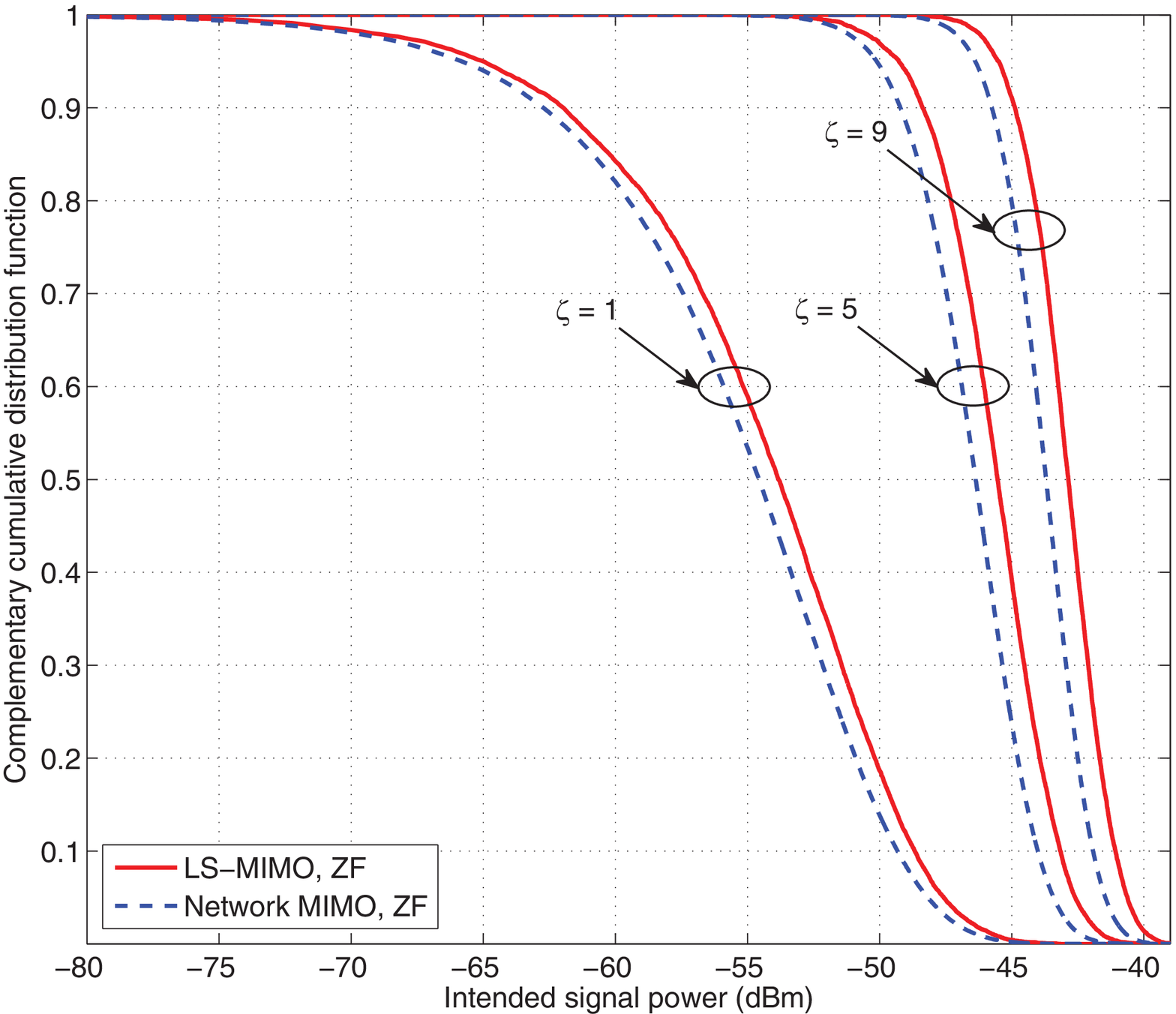}
\caption{CCDF of the signal power in the LS-MIMO and the network MIMO systems for a user located at $\LB 15 \mr{m},15  \mr{m}\RB$, almost equidistant from the cooperating BSs.}
\label{fig:eff_chan_ccdf_center}
\end{figure}

\begin{figure}[!t]
\centering
\includegraphics[scale = 0.44]{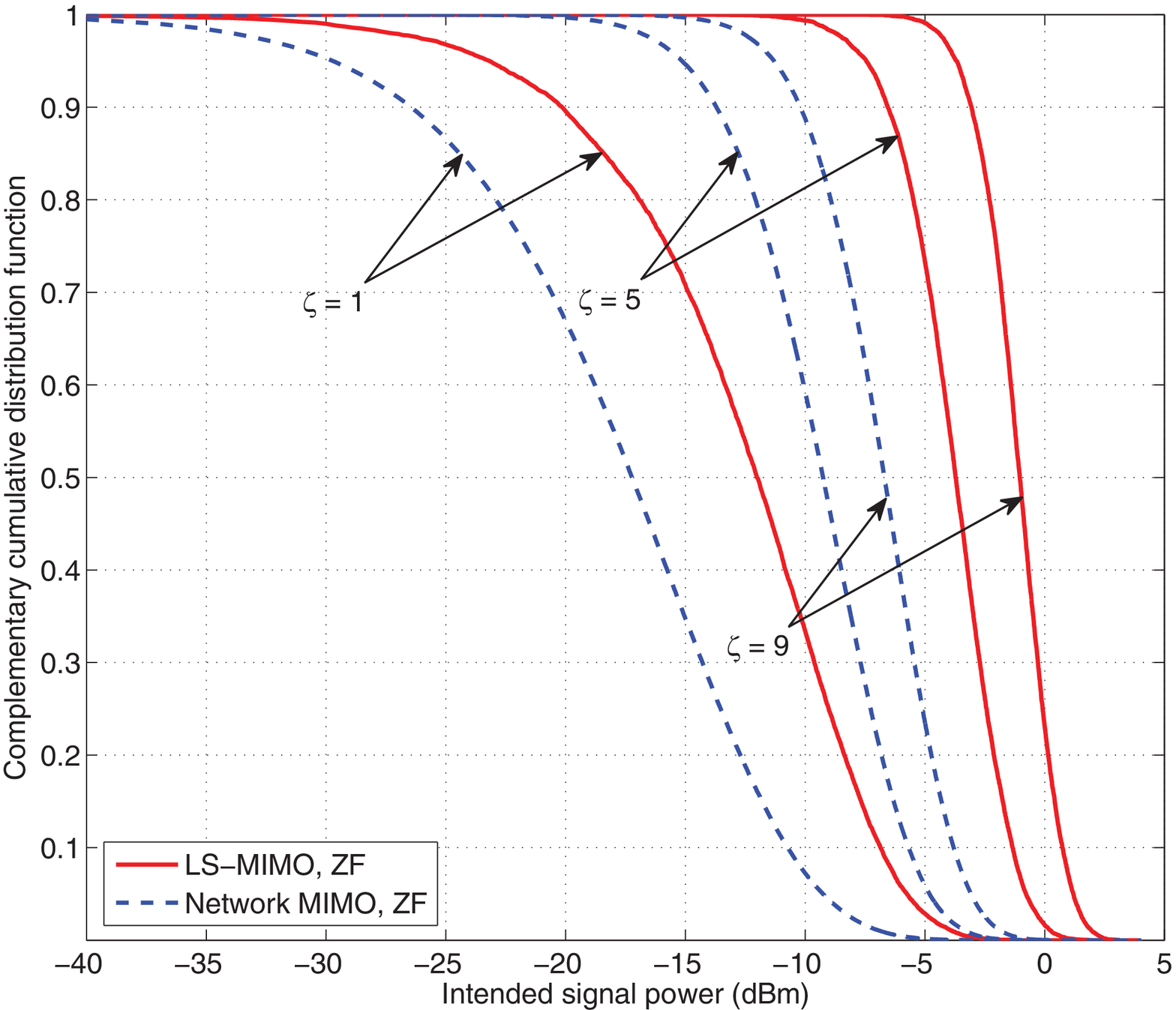}
\caption{CCDF of the signal power in the LS-MIMO and the network MIMO systems for a user located at $\LB 235 \mr{m},235  \mr{m}\RB$, closer to one of the cooperating BSs.}
\label{fig:eff_chan_ccdf_corner}
\end{figure}

$2)$ \emph{Interference power distribution}:
Fig.~\ref{fig:intf_epa} presents the cumulative distribution functions (CDF) of the inter-cluster interference powers at a fixed user location under both systems. To obtain the approximate distribution (approx. dist.) of the interference power in each system, the interference power corresponding to each single beam transmission is drawn from distribution functions presented in~\eqref{eq:ic_intf} and~\eqref{eq:nm_intf_dist} for the LS-MIMO and the network MIMO systems, respectively. The results are then averaged over the realizations of the corresponding distribution functions. Similar to the previous case, the numerical results are averaged over small-scale channel fading in the LS-MIMO system and the remaining user locations and channel realizations in the network MIMO system. As shown in the figure, the theoretical CDF curves obtained through the approximate distributions are indistinguishable. Note that due to the assumptions presented in Remark~\ref{intf_app}, the numerical CDF curves do not match perfectly with those obtained from approximate distributions. However, the gap between them is small.

\begin{figure}[!t]
\centering
\includegraphics[scale = 0.44]{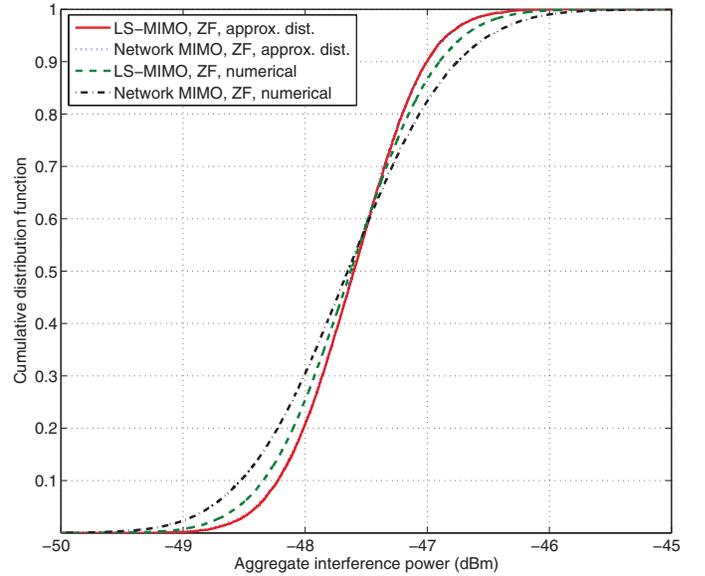}
\caption{CDF of the aggregate interference power at a user located at $\LB 15 \mr{m}, 15  \mr{m}\RB$, close to the cluster center, under both LS-MIMO and the network MIMO systems.}
\label{fig:intf_epa}
\end{figure}

$3)$ \emph{Downlink rates with round-robin scheduling}: Fig.~\ref{fig:wsr_epa} plots the users' downlink rates in the center cluster. Each BS chooses $K = 5$ out of $K_T = 60$ randomly scattered users using a round-robin scheduling scheme from within its cell area to be served during each time-slot, i.e., the scheduling weight of each user is $1/12$. The results are again averaged over both user locations and small-scale channel fading. In complete agreement with our derivations in Section~\ref{sec:epa}, the LS-MIMO system provides significant performance gains as compared to the network MIMO system. In particular, Fig.~\ref{fig:wsr_epa} shows that the rate improvement is almost $55 \%$ for the $10^{\mr{th}}$ percentile users, and this holds under different numbers of transmit antennas. It is also noticeable that even the LS-MIMO system with $\zeta = 5$ outperforms the network MIMO system with $\zeta = 9$.

$4)$ \emph{Performance evaluation with RZF}: Fig.~\ref{fig:wsr_rzf} compares the performance gains of the two systems with ZF and RZF beamforming schemes, where the achievable SINR of user $i$ in cluster $l$ of an LS-MIMO system and a network MIMO system with RZF are, respectively, given by~\eqref{eq:sinr_ic_rzf} and~\eqref{eq:sinr_nm1_rzf} at the top of the next page. Here, $\LB b,k \RB $ denotes the possible BS and user associations. Further, the regularization factor is set to $1/\rho$ for each user.
\begin{figure*}[!t]
\begin{align}
\gamma_{ibl}^{\mr{LSM,RZF}} &= \frac{\rho \lvert \g_{ilbl}\htp \w_{ibl}\rvert^2}{\sum_{\LB b',k \RB \neq \LB b,i \RB} \rho \lv \g_{ilb'l} \htp \w_{kb'l}\rv^2  +\sum_{j \neq l} \sum_{m} \sum_{k} \rho \lvert \f_{ilmj} \htp \w_{kmj} \rvert^2 + 1}\label{eq:sinr_ic_rzf}  \\
\mr{\gamma}_{il}^{\mr{NM,RZF}} &= \frac{\rho \lv \g_{il}\htp \w_{il}\rv^2}{\sum_{k \neq i}  \rho \lv \g_{il}\htp \w_{kl}\rv^2 +\sum_{ j \neq l} \sum_{k} \rho \lv \f_{ilj}\htp \w_{kj}\rv^2  + 1}. \label{eq:sinr_nm1_rzf} \\
\hline \nonumber
\end{align}
\end{figure*}
Similar to the previous case, each BS chooses $K = 5$ users based on a round-robin scheduling to be served during each time-slot. As the figure illustrates, employing RZF provides considerable gains in both systems. Moreover, even with RZF beamforming, the LS-MIMO system outperforms the comparable network MIMO system.

\begin{figure}[!t]
\centering
\includegraphics[scale = 0.44]{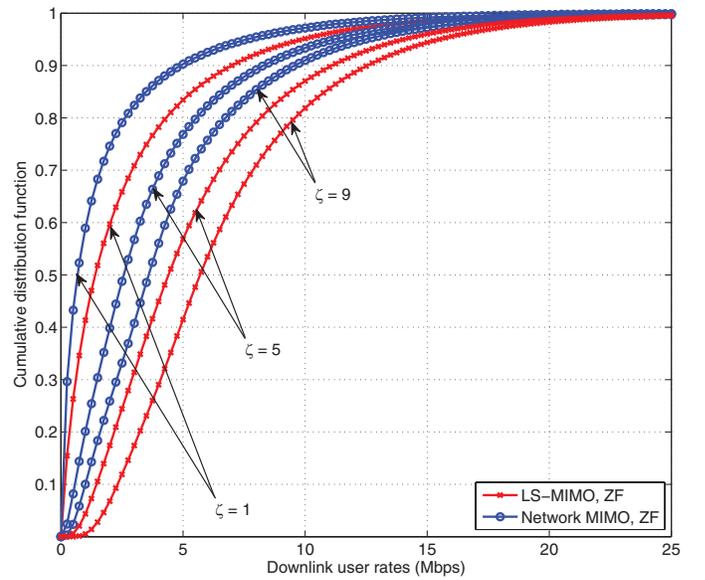}
\caption{CDF of the downlink rates using both coordination approaches with ZF.}
\label{fig:wsr_epa}
\end{figure}

\begin{figure}[!t]
\centering
\includegraphics[scale = 0.44]{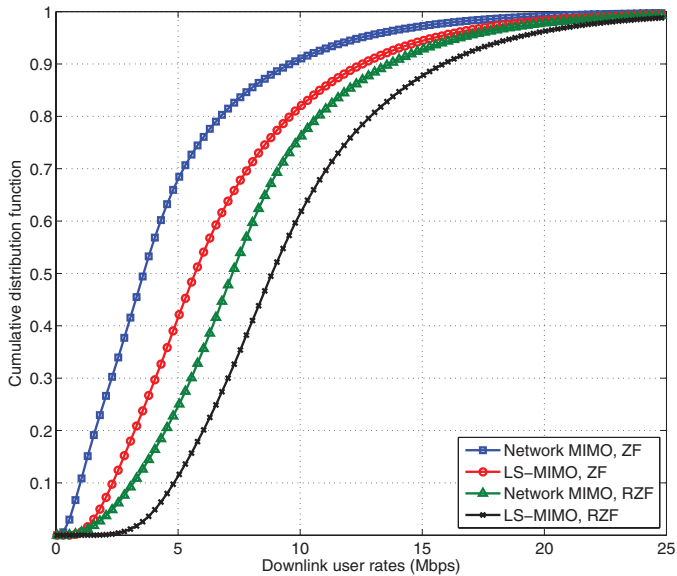}
\caption{CDF of the downlink rates using both coordination approaches with ZF and RZF, and $\zeta = 9$.}
\label{fig:wsr_rzf}
\end{figure}

\section{Concluding Remarks} \label{sec:conc}
This paper compares two distinct downlink multicell interference mitigation techniques: LS-MIMO and network MIMO. The two considered systems in this paper distribute their available power equally across the downlink beams, are capable of completely eliminating intra-cluster interference, while providing the same number of spatial DoF per user, and are subject to identical CSI acquisition overhead. Hence, it is not obvious whether one system outperforms the other. By a careful analysis of the propagation characteristics of the two systems, however, this paper shows that a network MIMO system suffers from the fact that users' intended signals are delivered by multiple scattered BSs rather than by the closest BS as in an LS-MIMO system. In particular, the disparity in the channel strengths between a user and its serving BSs introduces additional penalty in terms of the received signal power in a network MIMO system as compared to an LS-MIMO system. Further, this paper shows that, even though the two systems have distinct considerations in choosing their ZF beam directions, the aggregate inter-cluster interference power seen by each user is identically distributed under both systems. By incorporating these two results using tools from stochastic orders, we show that an LS-MIMO system provides considerable performance improvement under a wide range of utility functions as compared to a network MIMO system. Further, while the analytical comparison in this paper relies on employing ZF beamforming in each cluster, we show numerically that our conclusion also holds true in a cellular network where RZF is employed in each individual cluster. Given the likely lower cost of adding excess number of antenna elements at each BS versus joint data processing and establishing backhaul links across the BSs, the main implication of this paper is that LS-MIMO could be the preferred approach for multicell interference mitigation in wireless networks.
\appendices

\section{Proof of Theorem~\ref{sig_order}} \label{sec:app0}
\begin{IEEEproof}
Without loss of generality, we consider user $i$ located in the cell region of BS $b^*$ in cluster $l$, i.e., $b^{*}$ is its closest BS and compare the shape and scale parameters of the Gamma random variables representing the distribution of the signal power in a $\LB C,B,BM,K_c\RB$ LS-MIMO system and a $\LB C,B,M,K_c\RB$ network MIMO system.

From the shape parameter in~\eqref{eq:nm_channel}, we have that
\begin{equation}
k_{il} \leq BM \nonumber
\end{equation}
with equality only in the zero-probability case of user $i$ equidistant from its $B$ serving BSs, i.e., $\beta_{ilbl} = \beta_{ilb'l},~\forall b \neq b'$. Thus
\begin{equation}
 \frac{k_{il}\LB BM - K_c + 1\RB}{BM} \leq BM - K_c + 1 \label{eq:shape_comp}
\end{equation}
where the left-hand side and the right-hand side of the inequality~\eqref{eq:shape_comp} are the shape parameters of the Gamma random variables representing the distribution of the signal power in network MIMO and LS-MIMO systems, respectively. 

Next, considering the scale parameter in~\eqref{eq:nm_channel}, it follows that
\begin{align}
\theta_{il} &= \frac{\sum_{b = 1}^B \beta_{ilbl}^2}{\sum_{b = 1}^B \beta_{ilbl}} = \frac{\beta_{ilb^*l} \LB 1 + \sum_{b \neq b^*} \LB\frac{\beta_{ilbl}}{\beta_{ilb^*l}}\RB^2 \RB}{1 + \sum_{b \neq b^*} \LB \frac{\beta_{ilbl}}{\beta_{ilb^*l}}\RB} \ \stackrel{\LB a \RB}{\leq} \beta_{ilb^*l} \label{eq:scale_ine}
\end{align}
where relation $\LB a \RB$ follows from the fact that user $i$ is closer to BS $b^{*}$ than other BSs in cluster $l$, i.e. 
\begin{equation}
\frac{\beta_{ilbl}}{\beta_{ilb^*l}} \leq 1,~\forall b \neq b^*. \nonumber
\end{equation}
Note that the the left-hand side and the right-hand side of the inequality~\eqref{eq:scale_ine} are the scale parameters of the Gamma random variables representing the distribution of the signal power in network MIMO and LS-MIMO systems, respectively. 

Therefore, both the shape and the scale parameters are larger under an LS-MIMO system as compared to a network MIMO system. Since the CCDF of a Gamma random variable is increasing in its parameters, we conclude that the signal power of any chosen user in an LS-MIMO system first-order stochastically dominates that of a network MIMO system.
\end{IEEEproof}

\section{Proof of Theorem~\ref{epa_intf_dist}} \label{sec:app1}
\begin{IEEEproof}
In both LS-MIMO and network MIMO systems, the interference signals due to the transmissions initiated by different clusters are independent. Thus, without loss of generality, we only focus on the interference power distribution caused by cluster $j$ at user location $i$ within cluster $l$.

The aggregate interference power created by cluster $j$ in an LS-MIMO system is given by
\begin{equation}
\sum_{m = 1}^{B}\sum_{k = 1}^{K}  \lv \f_{ilmj}\htp \w_{kmj}\rv^2 \label{eq:ic_intf_one_cluster}
\end{equation}
where the $m^{\mr{th}}$ term (the inner sum in~\eqref{eq:ic_intf_one_cluster}) denotes the interference power produced by BS $m$ in cluster $j$. The interference produced by BS $m$ is itself a sum of $K$ terms, each associated with the transmission of a single ZF beam from BS $m$. In this paper, we assume that the $K$ ZF beams designed by each interfering BS of an LS-MIMO system are orthogonal (Assumption A$2$) . Therefore, the interference power imposed by BS $m$ in cluster $j$ is the power of $ \f_{ilmj}$ projected onto $K$ orthogonal subspaces. This is equivalent to a summation of $K$ independent Gamma random variables; each term is distributed as $\Gamma \LB 1,\beta_{ilmj}\RB$ as presented in~\eqref{eq:ic_intf}. Noting that these $K$ Gamma random variables have identical scale parameters, the aggregate interference power from BS $m$ in cluster $j$ is distributed as 
\begin{equation}
\sum_{k = 1}^{K}  \lv \f_{ilmj}\htp \w_{kmj}\rv^2  \sim \Gamma \LB K,\beta_{ilmj}\RB. \label{eq:intf_clus_ls}
\end{equation}

Next, we note that the interference signals produced by different BSs of cluster $j$ are independent. Specifically, since in an LS-MIMO system the ZF beams designed at each BS are only dependent on the small-scale channel fading between the BS and the set of $K_c$ users in its associated cluster, and the small-scale channel fading is independent across the BSs, the summation in~\eqref{eq:ic_intf_one_cluster} is a sum of $B$ independent, but non-identically distributed Gamma random variables with the $m^{\mr{th}}$ term distributed according to~\eqref{eq:intf_clus_ls}. Therefore, using Approximation~\ref{app_1}, we have
\begin{multline}
\sum_{m = 1}^{B}\sum_{k = 1}^{K}  \lv \f_{ilmj}\htp \w_{kmj}\rv^2 \sim \\
 \Gamma \LB K\frac{\LB\sum_{m = 1}^B \beta_{ilmj} \RB^2}{\sum_{m = 1}^B \beta_{ilmj}^2},  \frac{\sum_{m = 1}^B \beta_{ilmj}^2}{\sum_{m = 1}^B \beta_{ilmj}} \RB. \label{eq:ic_intf_one_cluster1}
\end{multline}

Next, we obtain the distribution of the total interference power produced by cluster $j$ in a network MIMO system which is a summation of $K_c$, not necessarily independent, terms given by
\begin{equation}
\sum_{k = 1}^{K_c}  \lv \f_{ilj}\htp \w_{kj}\rv^2 \label{eq:sum_intf_power_nm}
\end{equation}
where, based on Lemma~\ref{nm_intf_power_dist}, each term is distributed as in~\eqref{eq:nm_intf_dist}. Here, we assume that the $K_c$ ZF beams designed at each interfering cluster of a network MIMO system are orthogonal (Assumption A2). As a consequence, the summation in~\eqref{eq:sum_intf_power_nm} is a sum of $K_c$ independent Gamma random variables which have the same scale parameters. Therefore, it follows that
\begin{align}
&\sum_{k = 1}^{K_c}  \lv \f_{ilj}\htp \w_{kj}\rv^2 \sim \Gamma \LB K\frac{\LB\sum_{m = 1}^B \beta_{ilmj} \RB^2}{\sum_{m = 1}^B \beta_{ilmj}^2}, \frac{\sum_{m = 1}^B \beta_{ilmj}^2}{\sum_{m = 1}^B \beta_{ilmj}} \RB. \label{eq:nm_intf_one_cluster}
\end{align}

From~\eqref{eq:ic_intf_one_cluster1} and~\eqref{eq:nm_intf_one_cluster}, the total interference powers produced by cluster $j$ at user $i$ are identically distributed. Given that the received interference signals from different clusters are independent, it follows that the aggregate interference power at any chosen user location under the two systems are equal in distribution. This completes the proof.
\end{IEEEproof}

\section{Proof of Theorem~\ref{sinr_order}}\label{sec:app2}
\begin{IEEEproof}
Without loss of generality, we evaluate the achievable SINR of user $i$ in cluster $l$ under both systems as given by~\eqref{eq:sinr_ic} and~\eqref{eq:sinr_nm1}.


Recalling that $I_{il}^{\mr{LSM}}$ and $I_{il}^{\mr{NM}}$, respectively, denote the aggregate interference power seen by user $i$ in cluster $l$ in LS-MIMO and network MIMO systems, based on Theorem~\ref{epa_intf_dist}, we have that $I_{il}^{\mr{LSM}} \stackrel{d}{=}I_{il}^{\mr{NM}}$. Hence, it follows that
\[\frac{\left(\rho I_{il}^{\mr{LSM}} + 1\right)}{\rho} \stackrel{d}{=}\frac{\left(\rho I_{il}^{\mr{NM}} + 1\right)}{\rho}.\]

For convenience, let $\bar{\gamma} \stackrel{d}{=} \left(\rho I_{il}^{\mr{LSM}} + 1\right)/\rho \stackrel{d}{=} \left(\rho I_{il}^{\mr{NM}} + 1\right)/\rho$ and $p_{\bar{\gamma}}(\cdot)$ denote the common distribution of these two random variables. Further, let $X_{ibl} = \lvert \g_{ilbl}\htp \w_{ibl}\rvert^2$ and $Y_{il} = \lv \g_{il}\htp \w_{il}\rv^2$. Therefore
\begin{eqnarray}
\PP \LCB \gamma_{ibl}^{\mr{LSM,ZF}} \geq \gamma_0\RCB & = &  \PP \LCB X_{ibl} \geq \gamma_0 \frac{\left(\rho I_{il}^{\mr{LSM}} + 1\right)}{\rho}\RCB \nonumber \\
 & = & \int_0^\infty \PP\LCB X_{ibl} \geq \gamma_0 \bar{\gamma} \lvert \bar{\gamma} \RCB p_{\bar{\gamma}}(\bar{\gamma}) d\bar{\gamma} \nonumber \\
 & \geq & \int_0^\infty \PP\LCB Y_{il} \geq \gamma_0 \bar{\gamma} \lvert \bar{\gamma} \RCB p_{\bar{\gamma}}(\bar{\gamma}) d\bar{\gamma} \nonumber \\
 & = & \PP\LCB \gamma_{il}^{\mr{NM,ZF}} \geq \gamma_0 \RCB \nonumber
\end{eqnarray}
where the inequality follows from Theorem 1. Since this result holds for every choice of $\gamma_0$, it implies that
\begin{equation}
\gamma_{ibl}^{\mr{LSM,ZF}} \geq_{st} \gamma_{il}^{\mr{NM,ZF}}. \nonumber
\end{equation}
%
\end{IEEEproof}

\bibliographystyle{IEEEtran}
\bibliography{IEEEabrv,MyRef1}

\begin{thebibliography}{10}
\providecommand{\url}[1]{#1}
\csname url@samestyle\endcsname
\providecommand{\newblock}{\relax}
\providecommand{\bibinfo}[2]{#2}
\providecommand{\BIBentrySTDinterwordspacing}{\spaceskip=0pt\relax}
\providecommand{\BIBentryALTinterwordstretchfactor}{4}
\providecommand{\BIBentryALTinterwordspacing}{\spaceskip=\fontdimen2\font plus
\BIBentryALTinterwordstretchfactor\fontdimen3\font minus
  \fontdimen4\font\relax}
\providecommand{\BIBforeignlanguage}[2]{{%
\expandafter\ifx\csname l@#1\endcsname\relax
\typeout{** WARNING: IEEEtran.bst: No hyphenation pattern has been}%
\typeout{** loaded for the language `#1'. Using the pattern for}%
\typeout{** the default language instead.}%
\else
\language=\csname l@#1\endcsname
\fi
#2}}
\providecommand{\BIBdecl}{\relax}
\BIBdecl

\bibitem{GHHSSY10}
D.~Gesbert, S.~Hanly, H.~Huang, S.~S. Shitz, O.~Simeone, and W.~Yu,
  ``Multi-cell {MIMO} cooperative networks: A new look at interference,''
  \emph{IEEE J.\ Sel.\ Areas Commun.}, vol.~28, no.~9, pp. 1380--1408, Dec.
  2010.

\bibitem{JTSHSP08}
S.~Jing, D.~N.~C. Tse, J.~B. Soriaga, J.~Hou, J.~E. Smee, and R.~Padovani,
  ``Multicell downlink capacity with coordinated processing,'' \emph{EURASIP
  J.\ Wireless\ Commun.\ and Netw.}, pp. 1--19, Apr. 2008.

\bibitem{HMKLC11}
H.~Huh, S.~H. Moon, Y.~T. Kim, I.~Lee, and G.~Caire, ``Multi-cell {MIMO}
  downlink with cell cooperation and fair scheduling: A large-system limit
  analysis,'' \emph{IEEE Trans.\ Inf.\ Theory}, vol.~57, no.~12, pp.
  7771--7786, Dec. 2011.

\bibitem{KFV06}
M.~K. Karakayali, G.~J. Foschini, and R.~A. Valenzuela, ``Network coordination
  for spectrally efficient communications in cellular systems,'' \emph{IEEE
  Trans.\ Wireless Commun.}, vol.~13, no.~4, pp. 56--61, Aug. 2006.

\bibitem{RC09}
S.~A. Ramprashad and G.~Caire, ``Cellular vs. network {MIMO}: A comparison
  including the channel state information overhead,'' \emph{Proc.\ IEEE Annu.\
  Int.\ Symp.\ on Personal, Indoor, and Mobile Radio Commun.\ (PIMRC)}, pp.
  878--884, Sep. 2009.

\bibitem{M10}
T.~L. Marzetta, ``Noncooperative cellular wireless with unlimited numbers of
  base station antennas,'' \emph{IEEE Trans.\ Wireless Commun.}, vol.~9,
  no.~11, pp. 3590--3600, Nov. 2010.

\bibitem{RPLLM13}
F.~Rusek, D.~Persson, B.~K. Lau, E.~G. Larsson, T.~L. Marzetta, O.~Edfors, and
  F.~Tufvesson, ``Scaling up {MIMO}: Opportunities and challenges with very
  large arrays,'' \emph{IEEE Signal Process.\ Mag.}, vol.~30, no.~1, pp.
  40--60, Jan. 2013.

\bibitem{LTEM13}
E.~Larsson, O.~Edfors, F.~Tufvesson, and T.~Marzetta, ``Massive {MIMO} for next
  generation wireless systems,'' \emph{IEEE Commun.\ Mag.}, vol.~52, no.~2, pp.
  186--195, Feb. 2014.

\bibitem{YM13}
H.~Yang and T.~L. Marzetta, ``Performance of conjugate and zero-forcing
  beamforming in large-scale antenna systems,'' \emph{IEEE J.\ Sel.\ Areas
  Commun.}, vol.~31, no.~2, pp. 172--179, Feb. 2013.

\bibitem{LSCHMNS12}
D.~Lee, H.~Seo, B.~Clerckx, E.~Hardouin, D.~Mazzarese, S.~Nagata, and
  K.~Sayana, ``Coordinated multipoint transmission and reception in
  {LTE}-advanced: deployment scenarios and operational challenges,'' \emph{IEEE
  Commun.\ Mag.}, vol.~50, no.~2, pp. 1380--1408, Dec. 2010.

\bibitem{DY10}
H.~Dahrouj and W.~Yu, ``Coordinated beamforming for the multicell multi-antenna
  wireless system,'' \emph{IEEE Trans.\ Wireless Commun.}, vol.~9, no.~5, pp.
  1748--1759, May 2010.

\bibitem{YKS13}
W.~Yu, T.~Kwon, and C.~Shin, ``Multicell coordination via joint scheduling,
  beamforming, and power spectrum adaptation,'' \emph{IEEE Trans.\ Commun.},
  vol.~12, no.~7, pp. 1--14, July 2013.

\bibitem{HQSH11}
W.~W.~L. Ho, T.~Q.~S. Quek, S.~Sun, and R.~W. Heath, ``Decentralized precoding
  for multicell {MIMO} downlink,'' \emph{IEEE Trans.\ Commun.}, vol.~10, no.~6,
  pp. 1798--1809, June 2011.

\bibitem{KA12}
M.~Kountouris and J.~G. Andrews, ``Downlink {SDMA} with limited feedback in
  interference-limited wireless networks,'' \emph{IEEE Trans.\ Wireless
  Commun.}, vol.~11, no.~8, pp. 2730--2741, May 2012.

\bibitem{JAW11}
N.~Jindal, J.~G. Andrews, and S.~Weber, ``Multi-antenna communication in ad hoc
  networks: achieving {MIMO} gains with {SIMO} transmission,'' \emph{IEEE
  Trans.\ Commun.}, vol.~59, no.~2, pp. 529--540, Dec. 2011.

\bibitem{HAGHB12}
K.~Huang, J.~G. Andrews, D.~Guo, R.~W. Heath, and R.~A. Berry, ``Spatial
  interference cancellation for multiantenna mobile ad hoc networks,''
  \emph{IEEE Trans.\ Inf.\ Theory}, vol.~58, no.~3, pp. 1660--1676, Feb. 2012.

\bibitem{VKKG13}
S.~T. Veetil, K.~Kuchi, A.~K. Krishnaswamy, and R.~K. Ganti, ``coverage and
  rate in cellular networks with multi-user spatial multiplexing,''
  \emph{Proc.\ IEEE Int.\ Conf.\ Commun. (ICC)}, pp. 5855--5859, June 2013.

\bibitem{MF08}
P.~Marsch and G.~Fettweis, ``On multicell cooperative transmission in
  backhaul-constrained cellular systems,'' \emph{Ann. of {T}elecommun.},
  vol.~63, no. 5-6, pp. 253--269, May 2008.

\bibitem{MF09}
------, ``On downlink network {MIMO} under a constrained backhaul and imperfect
  channel knowledge,'' \emph{Proc.\ IEEE Global Commun.\ Conf.\ (GLOBECOM)},
  Nov. 2009.

\bibitem{RCP09}
S.~A. Ramprashad, G.~Caire, and H.~C. Papadopoulos, ``Cellular and network
  {MIMO} architectures: {MU-MIMO} spectral efficiency and costs of channel
  state information,'' \emph{Asilomar Conf. on Signals, Syst., and Computers},
  pp. 1811--1818, Nov. 2009.

\bibitem{CRP10}
G.~Caire, S.~A. Ramprashad, and H.~C. Papadopoulos, ``Rethinking network {MIMO:
  cost of {CSIT}, performance analysis, and architecture comparisons},'' in
  \emph{Inf.\ Theory\ Applicat.\ Workshop (ITA)}, 2010.

\bibitem{HTC12}
H.~Huh, A.~M. Tulino, and G.~Caire, ``Network {MIMO} with linear zero-forcing
  beamforming: large system analysis, impact of channel estimation, and
  reduced-complexity scheduling,'' \emph{IEEE Trans.\ Inf.\ Theory}, vol.~58,
  no.~5, pp. 2911--2934, May 2012.

\bibitem{BH07}
F.~Boccardi and H.~Huang, ``Limited downlink network coordination in cellular
  networks,'' \emph{Proc.\ IEEE Annu.\ Int.\ Symp.\ on Personal, Indoor, and
  Mobile Radio Commun.\ (PIMRC)}, Sep. 2007.

\bibitem{K96}
K.~Kerpez, ``A radio access system with distributed antennas,'' \emph{IEEE
  Trans.\ Commun.\ Technol.}, vol.~45, no.~2, pp. 265--275, May 1996.

\bibitem{CA07_2}
W.~Choi and J.~G. Andrews, ``Downlink performance and capacity of distributed
  antenna systems in a multicell environment,'' \emph{IEEE Trans.\ Commun.},
  vol.~6, no.~1, pp. 69--73, Jan. 2007.

\bibitem{ZH12}
R.~Zakhour and S.~V. Hanly, ``Base station cooperation on the downlink: large
  system analysis,'' \emph{IEEE Trans.\ Inf.\ Theory}, vol.~58, no.~4, pp.
  2079--2106, Apr 2012.

\bibitem{YG06}
T.~Yoo and A.~Goldsmith, ``On the optimality of multiantenna broadcast
  scheduling using zero-forcing beamforming,'' \emph{IEEE J.\ Sel.\ Areas
  Commun.}, vol.~24, no.~3, pp. 528--541, Mar. 2006.

\bibitem{HBD13}
J.~Hoydis, S.~ten Brink, and M.~Debbah, ``Massive {MIMO} in the {UL/DL} of
  cellular networks: how many antennas do we need?'' \emph{IEEE J.\ Sel.\ Areas
  Commun.}, vol.~31, no.~2, pp. 160--171, Feb. 2013.

\bibitem{SS94}
M.~Shaked and J.~G. Shanthikumar, \emph{Stochastic {O}rders and {T}heir
  {A}pplications}.\hskip 1em plus 0.5em minus 0.4em\relax Academic Press, 1994.

\bibitem{R11}
S.~M. Ross, \emph{An Elementary Introduction to Mathematical Finance},
  3rd~ed.\hskip 1em plus 0.5em minus 0.4em\relax Cambridge University Press,
  2011.

\bibitem{DKA13}
H.~S. Dhillon, M.~Kountouris, and J.~G. Andrews, ``Downlink {MIMO} {H}et{N}ets:
  modeling, ordering results and performance analysis,'' \emph{IEEE Trans.\
  Wireless Commun.}, vol.~12, no.~10, pp. 5208--5222, Oct. 2013.

\bibitem{LT12}
J.~Lee and C.~Tepedelenlioglu, ``Stochastic ordering of interference in
  large-scale wireless networks,'' \emph{IEEE Trans.\ Signal Process.},
  vol.~62, no.~3, pp. 729--740, Feb. 2014.

\bibitem{HWKS11}
R.~W. Heath, T.~Wu, Y.~H. Kwon, and A.~C.~K. Soong, ``Multiuser {MIMO} in
  distributed antenna systems with out-of-cell interference,'' \emph{IEEE
  Trans.\ Signal Process.}, vol.~59, no.~10, pp. 4885--4899, Oct. 2011.

\bibitem{CKA09}
V.~Chandrasekhar, M.~Kountouris, and J.~G. Andrews, ``Coverage in multi-antenna
  two-tier networks,'' \emph{IEEE Trans.\ Wireless Commun.}, vol.~8, no.~12,
  pp. 5314--5327, Oct. 2009.

\bibitem{SHCS13}
\BIBentryALTinterwordspacing
N.~Seifi, R.~W. Heath, M.~Coldrey, and T.~Svensson, ``Joint transmission mode
  and tilt adaptation in coordinated small-cell wireless networks,'' 2014.
  [Online]. Available: \url{http://arxiv.org/abs/1407.1066}
\BIBentrySTDinterwordspacing

\bibitem{Z10}
R.~Zhang, ``Cooperative multi-cell block diagonalization with per-{BS} power
  constraints,'' \emph{IEEE J.\ Sel.\ Areas Commun.}, vol.~28, no.~9, pp.
  1435--1445, Dec. 2010.

\bibitem{FM90}
P.~Frankl and H.~Maehara, ``Some geometric applications of the beta
  distribution,'' \emph{Ann. of the Inst. of Statistical Math.}, vol.~42,
  no.~3, pp. 463--474, Sep. 1990.

\end{thebibliography}


\begin{IEEEbiography}[{\includegraphics[width=1in,height=1.25in,clip,keepaspectratio]{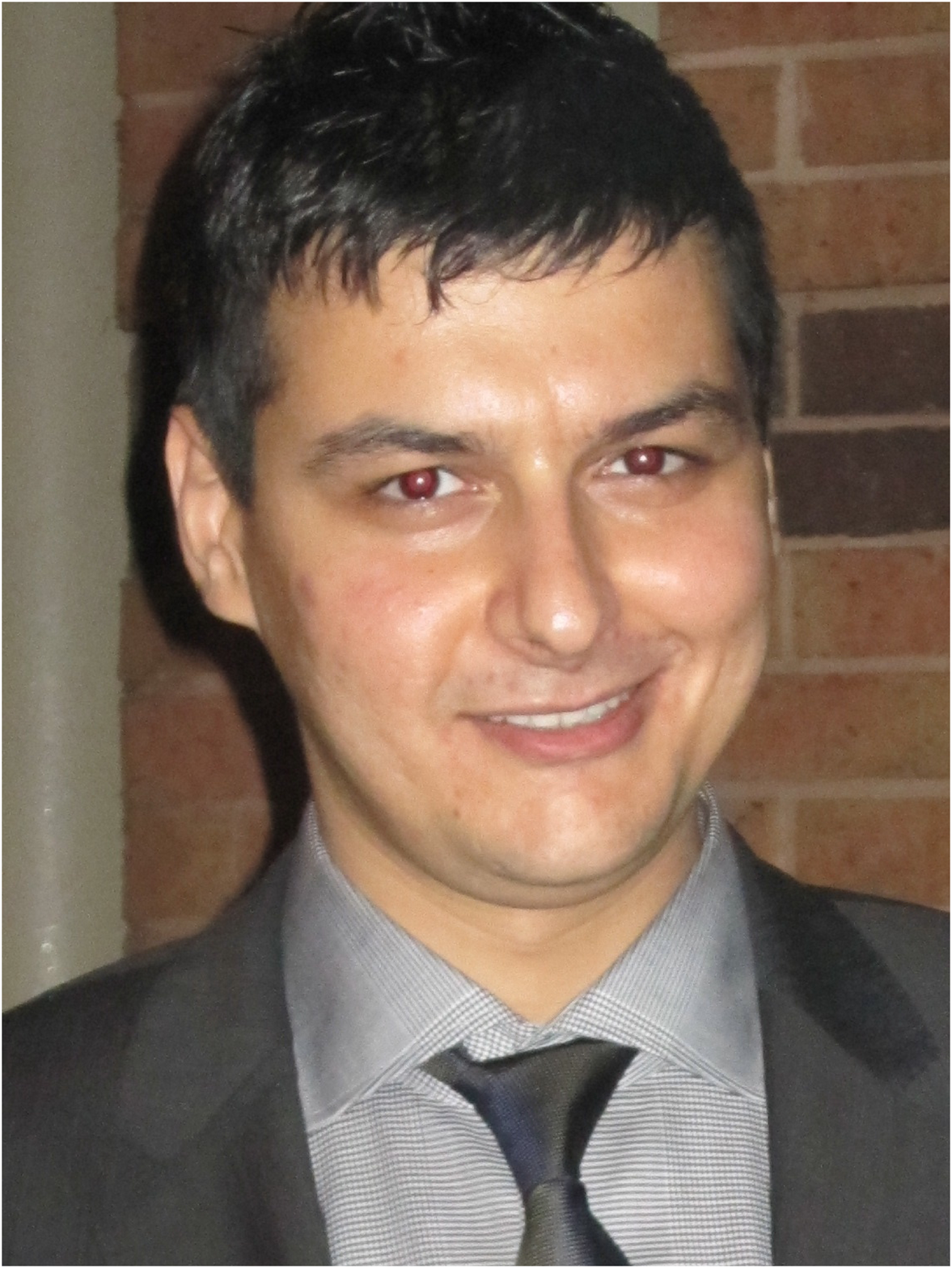}}]{Kianoush Hosseini}
(S'10) was born in Tehran, Iran. He received his B.Sc. degree in Electrical Engineering from Iran University of Science and Technology, Tehran, Iran in 2008. Since 2008, he has been with The Edward S. Rogers Sr. Department of Electrical and Computer Engineering at University of Toronto, Toronto, Ontario, Canada where he received his M.A.Sc. degree in 2010, and is currently pursuing the Ph.D. degree. During Summer 2012 and 2014, he was an intern at Bell-Labs, Alcatel-Lucent, in Stuttgart, Germany, and Qualcomm Corporate Research and Development, in San Diego, USA, respectively. Mr. Hosseini is a recipient of the Edward S. Rogers Sr. Graduate Scholarship. His main research interests include MIMO communications, optimization theory, information theory, and distributed algorithms.
\end{IEEEbiography}

\begin{IEEEbiography}[{\includegraphics[width=1in,height=1.25in,clip,keepaspectratio]{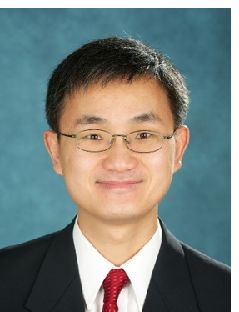}}]{Wei Yu}
(S'97-M'02-SM'08-FÕ14) received the B.A.Sc. degree in Computer Engineering and Mathematics from the University of Waterloo, Waterloo, Ontario, Canada in 1997 and M.S. and Ph.D. degrees in Electrical Engineering from Stanford University, Stanford, CA, in 1998 and 2002, respectively. Since 2002, he has been with the Electrical and Computer Engineering Department at the University of Toronto, Toronto, Ontario, Canada, where he is now Professor and holds a Canada Research Chair in Information Theory and Wireless Communications. His main research interests include information theory, optimization, wireless communications and broadband access networks.

Prof. Wei Yu served as an Associate Editor for IEEE Transactions on Information Theory (2010-2013), as an Editor for IEEE Transactions on Communications (2009-2011), as an Editor for IEEE Transactions on Wireless Communications (2004-2007), and as a Guest Editor for a number of special issues for the IEEE Journal on Selected Areas in Communications and the EURASIP Journal on Applied Signal Processing. He was a Technical Program Committee (TPC) co-chair of the Communication Theory Symposium at the IEEE International Conference on Communications (ICC) in 2012, and a TPC co-chair of the IEEE Communication Theory Workshop in 2014. He was a member of the Signal Processing for Communications and Networking Technical Committee of the IEEE Signal Processing Society (2008-2013). Prof. Wei Yu received an IEEE ICC Best Paper Award in 2013, an IEEE Signal Processing Society Best Paper Award in 2008, the McCharles Prize for Early Career Research Distinction in 2008, the Early Career Teaching Award from the Faculty of Applied Science and Engineering, University of Toronto in 2007, and an Early Researcher Award from Ontario in 2006.

Prof. Wei Yu is a Fellow of IEEE. He is a registered Professional Engineer in Ontario.
\end{IEEEbiography}

\begin{IEEEbiography}[{\includegraphics[width=1in,height=1.25in,clip,keepaspectratio]{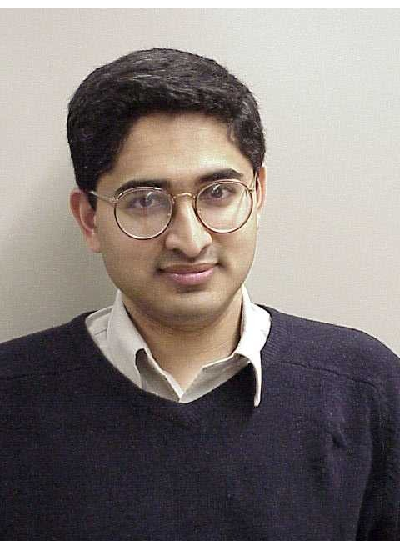}}]{Raviraj S. Adve}
(S'88-M'97-SM'06) was born in Bombay, India. He received
his B. Tech. in Electrical Engineering from IIT, Bombay, in 1990 and his
Ph.D. from Syracuse University in 1996. Between 1997 and August 2000, he
worked for Research Associates for Defense Conversion Inc. on contract with
the Air Force Research Laboratory at Rome, NY. He joined the faculty at the
University of Toronto in August 2000 where he is currently a Professor. Dr.
Adve's research interests include analysis and design techniques for
heterogeneous networks, energy harvesting networks and in signal processing
techniques for radar and sonar systems. He received the 2009 Fred Nathanson
Young Radar Engineer of the Year award.
\end{IEEEbiography}

\end{document}